\documentclass[prx,letterpaper,aps,superscriptaddress,floatfix,twocolumn,nofootinbib]{revtex4-2}
\pdfoutput=1
\usepackage{amsmath}
\usepackage{amssymb}
\usepackage{amsthm}
\usepackage{amsfonts}
\usepackage{bbm}
\usepackage{comment}
\usepackage[normalem]{ulem}
\usepackage{graphicx}
\usepackage[dvipsnames]{xcolor}
\usepackage{color,framed}
\usepackage{hyperref}
\definecolor{darkgreen}{rgb}{0,0.8,0.2}
\hypersetup{
    colorlinks=true,
    citecolor=darkgreen,
    linkcolor=blue,
    urlcolor=magenta
}
\usepackage{enumerate}
\usepackage{lipsum}
\usepackage{slashed}
\usepackage{url}
\usepackage{bbm}
\usepackage{chngcntr}
\usepackage{mathrsfs}
\counterwithout{equation}{section}

\usepackage{braket}
\usepackage{graphicx}
\usepackage{tikz}
\usetikzlibrary{decorations.pathreplacing,calc}
\definecolor{myred}{HTML}{a82a2a}

\def\eea{\end{eqnarray}}

\def\Tr{ {\rm Tr} }
\def\<{\langle}
\def\>{\rangle}

\def\bZ{\mathbb{Z}}

\usepackage{amsmath}
\usepackage{comment}
\usepackage{amssymb}
\usepackage{amsthm}
\newtheorem{thm}{Theorem}

\newtheorem{lemma}{Lemma}

\theoremstyle{definition}
\newtheorem{defin}{Definition}

\usepackage{color}
\usepackage{tikz-cd}
\usepackage{comment}

\usepackage{tikz}
\usepackage[export]{adjustbox}

\def\[#1\]{%
\begin{equation}\begin{gathered}#1\end{gathered}\end{equation}%
}

\def\Sum_#1{
\sum_{\substack{#1}}
}

\begin{document}

\title{Perturbatively Stable Self-Correcting Classical Memory from Gauge Averaging}
\author{Ryan Thorngren}
\email{ryan.thorngren@physics.ucla.edu}
\affiliation{Mani L. Bhaumik Institute for Theoretical Physics, Department of Physics and Astronomy, University of California, Los Angeles, California 90095, USA}

\begin{abstract}
    We show that the self-correcting memory in 3d Wegner gauge theory is stable to arbitrary small enough perturbations of the Hamiltonian. Our proof relies on a new method we dub ``gauge averaging'', which gives conditions under which explicitly broken gauge symmetries are effectively restored by fluctuations. These conditions show that the self-correcting memory phase is fluctuation stabilized, with its robustness to perturbations increasing with increasing temperature, up to some $T_c > 0$.
\end{abstract}

\maketitle

\textbf{Introduction---}The ability of physical systems to store information reliably for long periods of time is foundational to modern computing technology. In this work, we are interested in classical systems which do so in thermal equilibrium, in particular, without error correcting interventions. Such a system is therefore called self-correcting \cite{Bravyi_2009,landon2013local,brown2016quantum}. An example is the $\mathbb{Z}_2$-symmetric Ising model, which in its low temperature ferromagnetic phase can remain in a macroscopically magnetized state for a time which is exponentially long in the system size \cite{thomas1989bound,martinelli2004lectures}. However, the memory requires the symmetry: if we break the $\mathbb{Z}_2$ symmetry with an external magnetic field, then macroscopic magnetization can be destroyed in constant time, like degaussing a harddrive.

We seek perturbatively stable self-correcting memory, which remains self-correcting under arbitrary small enough perturbations. Such systems cannot have a local order parameter. Instead, they have topological order. The information is stored by a topological winding number, which is stable for an exponentially long lifetime.

The paradigmatic example of this is Wegner's 3d $\mathbb{Z}_2$ gauge theory, \cite{Wegner1971duality}. This model can be thought of as a model of fluctuating surfaces $S$ with energy $|\partial S|$. At low temperatures, typical surfaces can be assigned a homology class, and this homology class is exponentially long lived. However, general perturbative stability of this memory has yet to be rigorously shown.

Indeed, one might reasonably expect that this memory is \emph{not} perturbatively stable. For example, adding surface tension creates an extensive energy penalty for states in non-trivial homology sectors. However, it was demonstrated numerically in \cite{Poulin_2019} and proved in \cite{stahl2026slow} that as long as the surface tension is small enough, the memory survives.

In this work, we prove that the self-correcting memory of the Wegner gauge theory is in fact stable to all small-enough perturbations, and thus constitutes a self-correcting memory phase. The method leverages a local symmetry of the unperturbed model. We show that even though dangerous perturbations $V$ like the surface tension break this symmetry, as long as their strength is small relative to the temperature, thermal fluctuations restore this symmetry exactly, allowing us to replace $V$ with an effective ``gauge averaged'' interaction $V_{\rm eff}$, which does not destabilize the memory. We summarize this as follows (see also Theorem \ref{thmWegnermemoryapp}):

\begin{figure}
    \centering
    \includegraphics[width=0.7\linewidth]{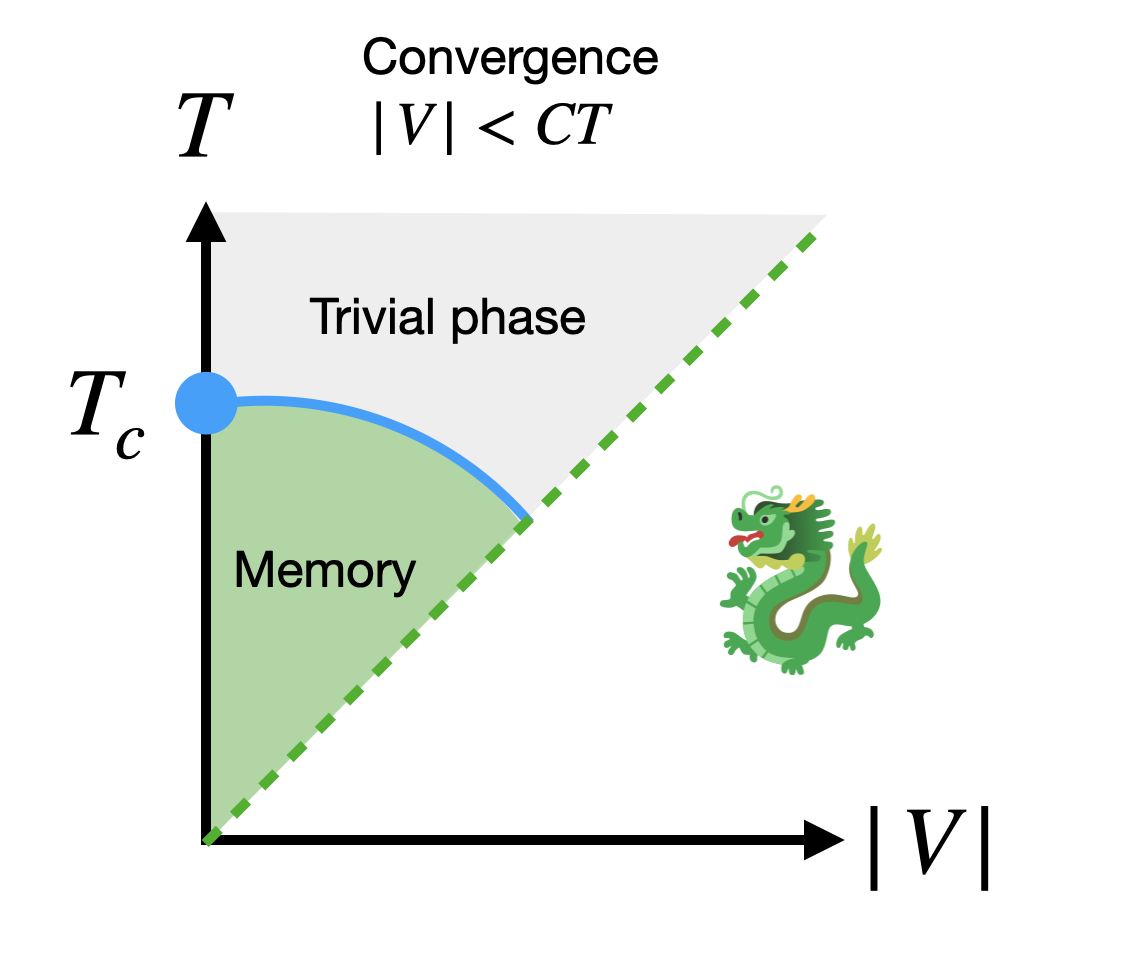}
    \caption{Schematic phase diagram of stable memory as a function of temperature $T$ and strength of perturbation $|V|$. $V_{\rm eff}$ defined by gauge averaging is local in the cone $|V| < CT$ (gray) and enjoys the local symmetry exactly. Here $C$ is a geometric constant. If there is a local-symmetry-protected memory phase at $V = 0$ up to some $T_c$, then the memory persists to a fluctuation-stabilized phase in a wedge, as shown schematically in green. Outside this region, denoted by the dragon, we can no longer use our method to define a local $V_{\rm eff}$, and the memory can break down.}
    \label{fig:wedge}
\end{figure}

\begin{thm}\label{thmWegnermemory}
    Let $H_0$ be the Hamiltonian\footnote{We represent our classical Gibbs states as quantum ones, where all Hamiltonians and perturbations are assumed to be diagonal in the classical basis.} \eqref{eqnWegnerham} of Wegner gauge theory on a 3-torus. For all temperatures $T = 1/\beta$ satisfying $0 < T < 1/\log 5$ and all parameters $\eta > \log 6 + 1 \approx 2.8$ there is a constant $C_{\rm stab}(T,\eta) >0$ such that for all perturbations $V$, if the interaction norm $|V|_\eta$ of $V$ defined in \eqref{eqninteractionnorm} satisfies
    \[\beta |V|_\eta < C_{\rm stab}(T,\eta),\]
    then the perturbed Gibbs state $\frac{1}{Z}e^{-\beta(H_0 + V)}$ has exponentially long-lived self-correcting memory.
\end{thm}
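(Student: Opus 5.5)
Write $H_0=-\sum_p \prod_{\ell\in p}\sigma_\ell$ for the plaquette Hamiltonian. It is invariant under the local gauge group $G=\mathbb{Z}_2^{\text{vertices}}$, where $g_v$ flips the links at $v$. The plan has three steps: replace $V$ by a gauge-invariant local effective interaction $V_{\rm eff}$, show the memory survives small gauge-invariant perturbations, and then transfer that statement back to the original Gibbs state.

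\textbf{Step 1: gauge averaging.} Let $\pi(\sigma)$ denote the gauge orbit of $\sigma$. Since $H_0$ is constant on orbits,
\[ e^{-\beta(H_0+V)}\big|_{\rm orbit\ marginal} = e^{-\beta H_0(\sigma)}\,\frac{1}{|G|}\sum_{g\in G} e^{-\beta V(g\sigma)} \equiv e^{-\beta H_0(\sigma)} e^{-\beta V_{\rm eff}(\sigma)}. \]
Thus $V_{\rm eff}$ is gauge invariant by construction. Every memory observable, such as the homology class of the flux configuration or Wilson loops, depends only on gauge-invariant data. Hence the law of these observables under the perturbed Gibbs state equals their law under $e^{-\beta(H_0+V_{\rm eff})}$.

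The substantive claim is that $V_{\rm eff}$ is a \emph{local} interaction with $|V_{\rm eff}|_{\eta'}\lesssim |V|_\eta$. Here the averaging over $G$ is a product measure of independent coin flips $g_v$, one per vertex. Write $e^{-\beta V(g\sigma)}=\prod_X(1+f_X)$ with $f_X=e^{-\beta V_X}-1$ and $|f_X|\le e^{\beta|V_X|}-1$. Then expand as a polymer gas in the connected clusters of supports $X$, with independent averaging over $g$ on each cluster. The logarithm of the averaged weight is given by a convergent Kotecký–Preiss cluster expansion whenever $\beta|V|_\eta$ is small compared to a combinatorial constant. I expect the condition $\eta>\log 6+1$ to arise exactly here: $\log 6$ counts the lattice animals of links and $1$ absorbs the factorials. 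The output is $V_{\rm eff}=\sum_Y \Phi_Y$ with $\Phi_Y$ gauge invariant and $\sum_{Y\ni x}|\Phi_Y|e^{\eta'|Y|}\le C\beta^{-1}\cdot\beta|V|_\eta$. This is the ``cone'' $|V|<CT$ of the figure.

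\textbf{Step 2: stability for gauge-invariant perturbations.} A gauge-invariant local interaction is a local function of the plaquette fluxes $\sigma_p=\prod_{\ell\in p}\sigma_\ell$. In the flux variables the model becomes a gas of closed vortex loops (the boundaries $\partial S$), with activity $e^{-2\beta}$ per unit length, modified by a small local interaction. The condition $T<1/\log 5$ makes $5e^{-2\beta}$, and a fortiori the loop-counting entropy bound, small enough that a Peierls/cluster-expansion argument for the loop gas holds with the additional small local weights included. Crucially, $V_{\rm eff}$ carries no surface-tension term, since it cannot see $S$ itself, only $\partial S$. Consequently the homology sectors have equal free energy up to exponentially small corrections.

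\textbf{Step 3: lifetime.} I would then run the standard energy-barrier/bottleneck argument, as in the Ising and Wegner proofs cited earlier, on the gauge-invariant process. Changing the homology class requires passing through a configuration with a vortex loop of length $\gtrsim L$. The Gibbs weight of this bottleneck set is $\le e^{-cL}$ by the convergent loop expansion, which bounds the mixing out of a sector, and hence the memory lifetime, by $e^{cL}$ for any local dynamics satisfying detailed balance.

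\textbf{Main obstacle.} The main obstacle is Step 1: showing that the averaged weight has a local logarithm uniformly in system size. The danger is that gauge orbits are nonlocal, so one must check that $\mathbb{E}_g\prod_X(1+f_X)$ factorizes over disconnected clusters. It does, because each $g_v$ is independent and affects only the links adjacent to $v$; the cluster expansion must then be controlled with the enlarged supports $X\cup\partial X$. My first attempt would be a direct polymer expansion with a Kotecký–Preiss criterion, tracking the constants to match $\eta>\log 6+1$ and defining $C_{\rm stab}$ from them.
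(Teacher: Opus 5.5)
Your architecture matches the paper's. A Kotecký--Preiss expansion over the independent vertex flips gives a local gauge-invariant $V_{\rm eff}$. (The threshold is $\log\Delta+1$, with $\Delta=6$ neighbouring generators and $e\Delta$ the animal bound for connected sets of generators, not link animals and factorials.) Because the bottle and bottleneck sets are gauge invariant, $\mu(A)=\mu^G(A)$ and $\mu(\partial A)=\mu^G(\partial A)$. A bottleneck ratio for $\mu^G$ therefore bounds the lifetime of the original $\mu$-reversible chain; that chain need not be gauge covariant, so there is no ``gauge-invariant process'' to run.

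The gap is Step 2, which is where the theorem is actually proved and which you only assert. Your intuition that $V_{\rm eff}$ sees $\partial S$ but not $S$ has to become a perimeter law for filling in a long syndrome loop $\Gamma$:
$|V_{\rm eff}(S'+D(\Gamma))-V_{\rm eff}(S')|\le C_{\rm perim}(m)|V_{\rm eff}|_m|\Gamma|+2L^3e^{-mL}|V_{\rm eff}|_m$.
\begin{itemize}
\item A gauge-invariant term changes only if its support contains a cycle linking $\Gamma$.
\item A contractible such cycle of length $r$ comes within distance $\tfrac{3}{2}r$ of $\Gamma$. These terms therefore live in a tube of $\lesssim r^3|\Gamma|$ generators and cost $e^{-mr}$.
\item Non-contractible ones have $l(X)\ge L$. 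They are \emph{not} functions of the plaquette fluxes, so your claim that gauge-invariant local terms are local functions of fluxes fails exactly for the terms that distinguish sectors. They must be split off as the exponentially small remainder.
\end{itemize}
The outcome is that $V_{\rm eff}$ lowers the long-loop tension to $1-C_{\rm perim}(m)|V_{\rm eff}|_m$. The Peierls sum then needs $5e^{-\beta(1-C_{\rm perim}(m)|V_{\rm eff}|_m)}<1$, which holds if $\beta|V_{\rm eff}|_m<b$ with $b<(\beta-\log 5)/C_{\rm perim}(m)$. This is why $C_{\rm stab}$ depends on $T$ and cannot be read off from the Step 1 constants alone.

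Your normalization hides how tight this is. The theorem's $H_0=\sum_p\frac12(1-B_p)$ charges $1$, not $2$, per unit syndrome length. The activity is therefore $e^{-\beta}$, and $T<1/\log 5$ is exactly the non-backtracking path-counting threshold $5e^{-\beta}<1$; there is no ``a fortiori'' slack. So in Step 3 you cannot lean on a convergent cluster expansion of the loop gas, since its standard (Kotecký--Preiss-type) convergence criteria do not reach this threshold. Nor can you bound $\mu^G(\partial A)$ absolutely, which would require control of $Z$.

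The paper bounds only the ratio $\mu^G(\partial A[C])/\mu^G(A[C])$. It uses the injection $S\mapsto(S+D(\Gamma),\Gamma)$ of the bottleneck into (bottle)$\times$(sets of long loops), which needs nothing beyond counting loops of length $\ge\varepsilon L$. Pigeonhole over the $8$ disjoint bottles then gives one with $\mu(A[C])\le 1/2$, so equality of sector free energies is never needed.

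Finally, for range-$R$ dynamics, ``passing through a configuration with a vortex loop of length $\gtrsim L$'' is not yet a bottleneck. The sector is only defined when every syndrome component can be filled in inside a contractible cube. The bottleneck must therefore allow up to $M$ long loops of total length $<L/2$, and you need the decoded class to be unchanged by an update (the paper's homological stability lemma).
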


Thus, we find that the self-correction of Wegner gauge theory is in fact stabilized by fluctuations: the higher the temperature, the more robust it is to perturbations, as long as we are below some $T_c$. This implies a phase diagram as in Fig. \ref{fig:wedge}. This wedge qualitatively matches the numerically calculated phase diagram when $V$ is a magnetic field $V = -h \sum_e Z_e$ (equivalent to the surface tension) \cite{Poulin_2019,somoza2021self}. We can also use gauge averaging to make a quantitative comparison near the origin $T=h=0$. Our analysis there implies an upper bound for the slope of the critical line $S_c \le 1/\text{arctanh}(1/5) \approx 4.9$, while the numerically estimated value from \cite{somoza2021self} is $S_c \approx 4.5$.

Before getting into technical details, let us stress that although Wegner gauge theory is called a gauge theory, we will not consider its local symmetry as a redundancy in the configuration space \cite{kogut1979introduction}. It is just a model which happens to have extensively many symmetries. Gauge averaging says that this extensive symmetry is surprisingly robust: even when it is explicitly broken, it is restored exactly by fluctuations if the symmetry breaking is weak enough.

\textbf{Gauge Averaging---}Now we will introduce our main method, which replaces a symmetry breaking perturbation $V$ by a symmetry-preserving $V_{\rm eff}$.

Let $H_0$ be a Hamiltonian enjoying a finite symmetry group $G$, and let $V$ be an arbitrary perturbation. We consider the Gibbs state of the perturbed system $\rho = \frac{1}{Z} e^{-\beta (H_0+V)}$ where $Z = \Tr e^{-\beta(H_0+V)}$. We average $\rho$ over $G$ to define the gauge averaged state \footnote{Throughout, we work in finite volume and assume a finite number of local symmetries, so this is a finite sum. The method could be generalized to more general compact Lie group local symmetries.}:
\[\rho^G := \mathbb{E}_g(g\rho g^{-1}) = \frac{1}{|G|} \sum_{g \in G} g \rho g^{-1}.\]
This new state has all the same correlation functions for symmetric operators, since for any such operator $O$,
\[\Tr \rho^G O = \mathbb{E}_g \Tr g \rho g^{-1} O  =  \mathbb{E}_g \Tr  \rho O 
= \Tr \rho O\]
On the other hand, we can formally write
\[\label{eqngaugeaveragedrho}\rho^G = \frac{1}{Z} e^{-\beta (H_0 + V_{\rm eff})},\]
(note $Z$ is unchanged from above since $\Tr\ g e^{-\beta(H_0+V)} g^{-1} = \Tr\ e^{-\beta(H_0+V)}$). In the classical case where $H_0$ and $V$ commute, this effective perturbation is given by
\[\label{eqnclassicalgaugeaveraging}\mathbb{E}_g(g e^{-\beta V} g^{-1}) = e^{-\beta V_{\rm eff}}.\]
The main question is to determine whether there is a local expression for $V_{\rm eff}$. We will give such a local expression using a cluster expansion, which holds under a convergence condition on $V$.

Intuitively, since $H_0$ is $G$-symmetric, at any $T > 0$ the Gibbs state $Z_0^{-1} e^{-\beta H_0}$ assigns an equal weight to all configurations in the same $G$-orbit. Thus, it behaves as an infinite temperature distribution in the ``pure gauge'' directions of the configuration space. If we include a $G$-breaking perturbation $V$ whose energy scale is small compared to $T$, then these pure gauge directions will still be in a high temperature phase, and can be safely integrated out to obtain a $G$-symmetric $V_{\rm eff}$.

To state the Gauge Averaging Theorem, we need some preliminaries. We assume that $G$ is a finite abelian group and we choose a set $G_0 \subset G$ of ``local generators''. We also let $\Lambda$ be the set of local degrees of freedom. These local degrees of freedom can be any Hilbert space (recall we express classical models as quantum ones which are diagonal in the classical basis). These define a bipartite graph $\mathcal{G}$ whose vertices are $G_0 \cup \Lambda$, with $g_0 \in G_0$ and $x \in \Lambda$ sharing an edge iff the local generator $g_0$ acts non-trivially on the spin $x$. For $X \subset \Lambda$, define $G_0(X)$ as the set of local generators whose support overlaps $X$.

An interaction may be expressed as $V = \sum_{X \subset \Lambda} V_X$, which is a sum over operators $V_X$ supported in subsets $X$ of the set of spins $\Lambda$. For each parameter $\eta \in [0,\infty)$ we define the ``interaction norm''
\[\label{eqninteractionnorm}|V|_\eta := \sup_{g_0 \in G_0} \sum_{X \subset \Lambda: g_0 \in G_0(X)} ||V_X||_\infty e^{\eta l(X)}\]
where $||V_X||_\infty$ is the maximum value of $|V_X|$ over all spin configurations in $X$ and $l(X)$ is a measure of the diameter of $X$ in the graph $\mathcal{G}$ (see \eqref{eqndiameter}). For $\eta > 0$ this norm receives an enhanced contribution from larger size terms, forcing them to be exponentially small for the series to converge \footnote{Note that this norm depends on the decomposition of $V$ into local terms.}. This suppression factor is familiar from the study of correlation bounds in gapped systems \cite{Hastings_2006}.

\begin{thm}\label{thmclassicalaveraging}
    Let the local symmetry have the property that each local generator has support overlapping the supports of at most $\Delta$ other local generators. Then for all parameters $\eta \ge 0$, $m \ge 0$,$b > 0$ satisfying $ b < \eta-m-\log \Delta - 1$, there is a constant $C_{\rm gauge}(\eta,m,b) > 0$ such that if
    \[\label{eqnconvergencebasic} \beta |V|_\eta < C_{\rm gauge}(\eta,m,b),\]
    then the there is an exponentially-local symmetric effective interaction $V_{\rm eff}$ satisfying \eqref{eqngaugeaveragedrho} with
    \[\beta |V_{\rm eff}|_m < b.\]
\end{thm}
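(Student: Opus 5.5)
Since $G$ is abelian and generated by $G_0$, write $g=\prod_{g_0\in G_0} g_0^{n_{g_0}}$ and realize the uniform average $\mathbb{E}_g$ as a product of independent uniform averages over the cyclic subgroups $\langle g_0\rangle$ (redundancies among generators only reweight the sum by a constant factor on each coset, so the average is unchanged). Then $\mathbb{E}_g(g e^{-\beta V}g^{-1})$ is the partition function of an auxiliary "gauge field" $\{n_{g_0}\}$ at fixed classical configuration $\sigma$. Each term $V_X$ depends only on $\sigma|_X$ and on $n_{g_0}$ for $g_0\in G_0(X)$. Write $e^{-\beta g V_X g^{-1}} = 1+f_X$, with $|f_X|\le e^{\beta\|V_X\|_\infty}-1\le \beta\|V_X\|_\infty e^{\beta\|V_X\|_\infty}$. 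Expanding the product over $X$ gives a polymer gas. The polymers are connected families $\{X_1,\dots,X_k\}$ whose generator sets $G_0(X_i)$ are linked in $\mathcal{G}$. Averaging the $n$'s factorizes over disconnected polymers, so $\mathbb{E}_g e^{-\beta gVg^{-1}} = \sum \prod_\gamma w_\gamma(\sigma)$, where $w_\gamma$ is the $n$-average of $\prod f_{X_i}$ over a connected cluster. Its support is $\bigcup X_i$ together with the spins touched by the generators involved.

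Next I would apply the Koteck\'y--Preiss cluster expansion to write $-\beta V_{\rm eff} = \log(\cdots)=\sum_{\text{clusters}}\phi^T(\gamma_1,\ldots)\prod w_{\gamma_i}$. Here $\phi^T$ is the Ursell function. Each connected cluster contributes one local term $V_{{\rm eff},Y}$ supported on the union $Y$ of its polymer supports. Symmetry is automatic: $g$ acts on the expansion by shifting the $n$ variables, which leaves each averaged weight invariant. So each $V_{{\rm eff},Y}$ is $G$-invariant. To verify the KP criterion $\sum_{\gamma'\not\sim\gamma}|w_{\gamma'}|e^{a(\gamma')}\le a(\gamma)$ with $a(\gamma)=\sum_i(m+b')\,l(X_i)+\dots$, I would count polymers anchored at a generator $g_0$. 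Trees of overlapping generators in a graph of degree $\le\Delta$ number at most $(e\Delta)^{\ell}$ per $\ell$ steps. This uses the standard bound on rooted subtree counts, which is where the $\log\Delta+1$ comes from. Each term $X$ anchored at $g_0$ is weighted by $\beta\|V_X\|_\infty e^{\eta l(X)}$, summing to at most $\beta|V|_\eta$. The diameter of a cluster is subadditive, $l(Y)\le\sum_i l(X_i)+(\text{linking steps})$, so its factor $e^{m\, l(Y)}$ together with the tree-counting factor $e^{(\log\Delta+1)\,l}$ and a margin $e^{b\,l}$ is absorbed by $e^{\eta l(X_i)}$ exactly when $b<\eta-m-\log\Delta-1$.

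Summing the resulting cluster bound over all clusters containing a fixed generator gives $\beta|V_{\rm eff}|_m\le F(\beta|V|_\eta)$. Here $F$ is an explicit function with $F(x)=O(x)$ as $x\to0$, e.g. $F(x)=\frac{c\,x}{1-c'x}$ up to the geometric-series constants. Choosing $C_{\rm gauge}(\eta,m,b)$ small enough that both the KP criterion holds and $F(C_{\rm gauge})<b$ finishes the proof.

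I expect the main obstacle to be the combinatorial step: making the diameter measure $l$ from \eqref{eqndiameter} compatible with the union of overlapping polymers, and getting exactly the constant $\log\Delta+1$ rather than a worse one. I would handle it by defining polymer connectivity purely through shared generators and bounding cluster counts via rooted trees in the generator-overlap graph. I would also use the strict inequality on $b$ to leave a geometric margin that sums the tail.
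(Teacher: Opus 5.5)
Your proposal follows essentially the same route as the paper. It factorizes the gauge average through the product measure on the cyclic groups generated by $G_0$, expands $\prod_X(1+f_X)$ into a polymer gas, and verifies a pinned Koteck\'y--Preiss criterion by grouping polymers according to the connected set of generators they cover (counted by $(e\Delta)^l$) and absorbing $e^{(b+m)l}$ into $e^{\eta l(X)}$, before bounding $\beta|V_{\rm eff}|_m$ by that same pinned sum.

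The bookkeeping you flag as the main obstacle is handled by the paper's setup rather than by extra work, but three points in your sketch need that setup to come out right. First, the second bullet of Definition~\ref{deflocalsymmetry} makes each averaged weight supported on $\bigcup_i X_i$ alone, not on the generator supports as you state; the enlarged support would break the subadditivity you need. Second, $l$ is defined through connected hulls, so dependent sets, which share a generator, satisfy $l(\bigcup_i X_i)\le\sum_i l(X_i)$ with no linking cost. Third, the factor $e^{ml(\gamma)}$ should sit in the activities, with $a(\gamma)=b\,|G_0(\gamma)|$, rather than inside $a(\gamma)$; otherwise the cluster bound gives no control of $e^{ml}$ for the non-root polymers. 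With that placement, the standard bound on clusters rooted at $\gamma$ gives $\beta|V_{\rm eff}|_m\le\sup_{g_0}\sum_{\gamma:\,g_0\in G_0(\gamma)}\|w_\gamma\|_\infty e^{ml(\gamma)+b|G_0(\gamma)|}<b$ directly.
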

The proof is given in Theorem \ref{apptheoremstatement} in the appendix. As mentioned above, the local expressions $V_{{\rm eff},X}$ come from the cluster expansion and each term is $G$-symmetric. Note that the robustness to breaking the local symmetry actually increases with raising the temperature.

\textbf{Self-Correcting Memory in Wegner Gauge Theory (Proof Sketch of Theorem \ref{thmWegnermemory})---}The Wegner gauge theory \cite{Wegner1971duality} is a model of spin-1/2s living on the edges of a 3d cubic lattice, together with the Hamiltonian
\[\label{eqnWegnerham}H_0 = \sum_p \frac12(1-B_p) \qquad B_p = \prod_{e \in \partial p} Z_e\]
where $p$ is a plaquette, $e \in \partial p$ is an edge in the boundary of $p$, and $Z_e$ is the Pauli matrix $\sigma^z$ acting on the state of the spin at $e$. This system has the remarkable property, emphasized already by Wegner, of having a low temperature phase without a local order parameter. This low temperature phase may be diagnosed by the perimeter law of the expectation of the Wilson loop $W_\Gamma = \prod_{e \in \Gamma} Z_e$, where $\Gamma$ is a closed loop of edges in the cubic lattice.

More recently, it has been emphasized that the low temperature phase has topological sectors and displays self-correcting memory, meaning that under any local Markov dynamics satisfying detailed balance with respect to the Gibbs state (such as Glauber or Metropolis dynamics), distinct topological sectors take an exponentially-in-system-size long time to mix with one another \cite{Roberts_2017,Poulin_2019,stahl2026slow}. This property has been proved to be robust to adding a small field $V = - h\sum_e Z_e$ to the Hamiltonian in \cite{stahl2026slow} \footnote{Note the phase diagram as a function of temperature and magnetic field $h$ is dual to the quantum Fradkin-Shenker model (2+1d toric code in an in-plane field) \cite{fradkin1979phase}, a duality which was exploited to study the later model in \cite{tupitsyn2010topological,somoza2021self}.}. Here we will show it is robust to all small enough local perturbations, proving Theorem \ref{thmWegnermemory}. Full details will be given in the appendix as Theorem \ref{thmWegnermemoryapp}. Here we will sketch the general idea.

To prove there is an exponential memory lifetime, following \cite{stahl2026slow} we use the notion of a ``bottleneck''. This may be defined for a general distribution $\mu$ on a configuration space $\Omega$ and applies to any suitable class of local Markov dynamics satisfying detailed balance with respect to $\mu$. We say $\mu$ has a bottleneck if there is a subset $A \subset \Omega$ with $\mu(A) < 1/2$ and another subset $\partial A$, disjoint from $A$, such that any local Markov dynamics must pass through $\partial A$ to leave $A$. In this case, we get a lower bound on the mixing time for any local Markov dynamics satisfying detailed balance for $\mu$ given by
\[\label{eqnbottleneck}t_{\rm mix} \ge \frac14 \mu(A)/\mu(\partial A).\]
See Lemma \ref{lemmabottleneck} in the Appendix. Thus, if the ``bottleneck'' $\partial A$ is much smaller than the ``bottle'' $A$, the system gets stuck for a long time in $A$.

The topological sectors of Wegner gauge theory are defined as follows \cite{Castelnovo_2008,serna2024worldsheetpatching1formsymmetries,stahl2026slow}. We consider a 3d cubic lattice with periodic boundary conditions of size $L \times L \times L$. We turn our spin configuration $Z_e = \pm 1$ into a $\bZ_2$ 2-chain $S$ on the dual cubic lattice, taking $S$ to be the mod 2 sum of dual plaquettes $p$ whose associated edge $e$ has $Z_e = -1$. The unperturbed energy \eqref{eqnWegnerham} is the length of the boundary of this 2-chain $\partial S$, which is a 1-cycle on the dual cubic lattice we call the syndrome. We say that two 1-cycles on the dual cubic lattice are disjoint if they do not share any vertices. We can express $\partial S$ as a disjoint union of 1-cycles. If all of these 1-cycles are short enough, say they each have length $< L$, then we ``decode'' the topological sector of $S$ as follows. Each connected component $\gamma_i \subset \partial S$ of the syndrome may be contained in a ``bounding cube'' of side length $L/2$, and within this cube we may choose a surface $D_i$ whose boundary $\partial D_i =\gamma_i$. Taking $C=S + \sum_i D_i$ we obtain a 2-cycle, which has a definite homology class independent of our choices of $D_i$, since each bounding cube is contractible. For simplicity, choose a particular bounding surface $D_i = D(\gamma_i)$ for all such loops $\gamma_i$ once and for all.

Let us suppose for the sake of the current discussion that the updates of our Markov chain act only on a single edge at a time (we relax this in Theorem \ref{thmWegnermemoryapp} below). Then if the configuration $S$ has all components of $\partial S$ length $< L/4$, then after a single Markov update, all of its components have a length $<L$. Indeed, the most these components can increase in length is when four components of $\partial S$ merge. Furthermore, such a single-edge update can create at most one loop of length $\ge L/4$.

We define the bottle $A[C]$ to be the set of configurations $S$ such that the components of $\partial S$ all have length $<L/4$ (call these components ``short'') and the ``decoded'' homology class of $S$ is $[C] \in H_2(T^3,\bZ_2) = \bZ_2^3$. We also define the bottleneck $\partial A[C]$ as the set of configurations $S$ for which precisely one component of the syndrome $\partial S$ is a ``long loop'' of length in $[L/4,L)$ and all others have length $< L/4$, and such that if we decode this configuration we obtain the class $[C]$. By definition, the Markov chain must pass through $\partial A[C]$ to leave $A[C]$.

Now that we have the bottles and the bottlenecks, we want to lowerbound the bottleneck ratio \eqref{eqnbottleneck}. We do this for the unperturbed model to demonstrate the method, following \cite{stahl2026slow}. Let $\Gamma$ be the long loop of $S \in \partial A[S]$. Recall we have chosen a particular $D(\Gamma)$ in its bounding cube such that $\partial D(\Gamma) = \Gamma$. Since all other components of $\partial S$ are short, $S+D(\Gamma) \in A[C]$. Thus, we obtain an injective map
\[i:\partial A[C] \hookrightarrow A[C] \times \{\text{loops of length }l\in[L/4,L)\}.\]
Moreover, the energy of $S$ is $|\partial S| = |\partial S+D(\Gamma)| + |\partial \Gamma|$. Thus, the Gibbs weight of $\mu(\partial A[C])$ can be bounded above by a sum over pairs on the right-hand side above.

The long loops may be overcounted by considering paths of length $l \in [L/4,L)$, giving us a Peierls bound \cite{peierls1936Ising} (assuming low enough temperature $5e^{-\beta} < 1$)
\[\mu(\partial A[C]) \le \mu(A[C]) L^3 \sum_{l=\lceil L/4 \rceil}^{L-1} 5^l e^{-\beta l} \\ \le \mu(A[C]) L^3 \frac{(5 e^{-\beta})^{L/4}}{1-5e^{-\beta}}\]
Since there are 8 possible $[C]$ on $T^3$, and each $A[C]$ is disjoint, at least one of them has $\mu(A[C]) < 1/2$. Therefore we can apply the bottleneck lemma, and we find a mixing time which diverges exponentially as $(e^{\beta}/5)^{L/4}/L^3$. Note that this gives a lower bound on the temperature above which the model is in a trivial phase: $T_c \ge 1/\log 5 \approx 0.62$. The numerically measured critical point lies at $T_c \approx 0.65$ \cite{Agrawal_2025}, suggesting that it is the entropic deconfinement of these long loops which indeed drives the phase transition. See also \cite{Roberts_2017} who derived the same estimate.

For the perturbed model, this Peierls analysis is in danger of breaking down, since $V$ may depend on more than just $\partial S$. In particular, the magnetic field $-h \sum_e Z_e$ gives a ``surface tension'' contribution proportional to the weight of $S$ itself, which invalidates the loop counting above. Indeed, at low temperature, one expects that the Gibbs distribution is dominated by low energy states, and the lowest energy states in non-trivial topological sectors $[C] \neq 0 \in H_2(T^3,\bZ_2)$ have a weight $|S| \approx L^2$. From this point of view, perturbative stability of the memory seems very surprising. 

However, gauge averaging provides the means to understand how perturbative stability arises from fluctuations, with a region of stability which grows with the temperature as in \eqref{eqnconvergencebasic}. Indeed, the unperturbed model has a local symmetry with generators associated to each vertex $v$, given by the operator
\[A_v = \prod_{e: v \in \partial e} X_e\]
where the product is over edges $e$ which have $v$ as a boundary vertex and $X_e$ is the Pauli matrix $\sigma^x$ acting on the spin at $e$. The associated bipartite graph $\mathcal{G}$ is the graph of vertices and edges of the cubic lattice. This has a bounded degree $\Delta = 6$ and so Theorem \ref{thmclassicalaveraging} applies. In particular, for any weak enough local perturbation $V$, there will be an exponentially-local effective perturbation $V_{\rm eff}$.

Let $\mu$ be the perturbed Gibbs distribution and $\mu^G$ its gauge averaged version. The sets $A[C]$ and $\partial A[C]$ are symmetric sets under $G$, defined only in terms of $\partial S$ and its topological sector (its homology class relative to $\partial S$), and thus
\[\label{eqnbottleneckequality}\mu(A[C])/\mu(\partial A[C]) = \mu^G(A[C])/\mu^G(\partial A[C]).\]
Therefore, if we can prove a bottleneck in these sets for $\mu^G$, we can conclude the exponential memory lifetime for $\mu$.

We have to make sure the Peierls argument applies to $\mu^G$. The important quantity to control is
\[|V_{\rm eff}(S' + D(\Gamma)) - V_{\rm eff}(S')|\]
where $S' \in A[C]$ and $\Gamma$ is a loop of length $[L/4,L)$. Because $V_{\rm eff}$ is symmetric, contributions to this can only come from terms whose support contains a curve linking $\Gamma$. Using the norm bound for $V_{\rm eff}$ in Theorem \ref{thmclassicalaveraging}, we are able to show that there is a perimeter law bound $\le C_{\rm perim}(m) |V_{\rm eff}|_m |\Gamma|$+exp. small, where $C_{\rm perim}(m)$ is a system-size-independent constant. Thus, for small enough $|V|_\eta$, we indeed get the required bottleneck, therefore proving the stability of the self-correcting memory. See Theorem \ref{thmWegnermemoryapp} in the Appendix for more details.

\textbf{Special Case: Magnetic Field Perturbation---}In the special case of an applied field $V = -h \sum_e Z_e$, we can give a relatively explicit form of $V_{\rm eff}$, and obtain a better estimate on the stability of the memory phase for this particular perturbation than Theorem \ref{thmclassicalaveraging}, which applies to an arbitrary perturbation.

First, since $Z_e^2=1$, we have $e^{\beta h Z_e} = \cosh(\beta h) + \sinh(\beta h) Z_e$. Up to a constant, we can thus express
\[\label{eqnfieldsum}e^{\beta h \sum_e Z_e} \propto \prod_e (1 + \tanh (\beta h) Z_e) = \sum_{\Gamma \subset E} \tau^{|\gamma|} \prod_{e \in \Gamma} Z_e\]
where $E$ is the set of edges of the cubic lattice and $\tau = \tanh(\beta h)$ is the natural small parameter in the sum expression. Gauge-averaging this expression projects this expression to a sum over edge sets $\Gamma$ which meet each vertex an even number of times.

We say such even edge sets are independent if they meet no common vertices. A ``polymer'' will be an even edge set which cannot be divided into two independent even edge sets. Let $P$ be the set of polymers and for $\gamma \in P$ define $W(\gamma) = \tau^{|\gamma|} \prod_{e \in \gamma} Z_e$. The cluster expansion yields (formally)
\[\label{eqnmagfieldcluster}V_{\rm eff} = -\frac{1}{\beta}\sum_{n=1}^\infty \frac{1}{n!} \prod_{(\gamma_1,\ldots,\gamma_n) \in P^n} \phi^T(\gamma_1,\ldots,\gamma_n) \prod_{i=1}^n W(\gamma_i)\]
where $\phi^T(\gamma_1,\ldots,\gamma_n)$ is a combinatorial factor which vanishes unless $(\gamma_1,\ldots,\gamma_n)$ is a ``cluster'' meaning it has a connected dependency graph, see \cite{fernandez2007cluster}.

We can apply the proof of Theorem \ref{thmclassicalaveraging} to this sum to obtain a lower bound on its radius of convergence. However, we can also observe that \eqref{eqnfieldsum} has the same expression as the thermal partition function for the unperturbed Wegner gauge theory we discussed above, but with $\gamma$ playing the role of the syndrome $\partial S$ and $\tau = e^{-\beta}$. This is an aspect of electric-magnetic duality in this model. Exploiting this, the Peierls analysis above implies that the generated Wilson loops in $V_{\rm eff}$ remain confined as long as $\tau < 1/5$. Even so, as analyzed above, they shift the effective tension of the syndrome $\partial S$. The size of the shift is $O(1/\beta)$ at each fixed $\tau$, so if we consider the effect of the perturbation near the origin, taking the limit $h \to 0$, $\beta \to \infty$ but keeping the ratio $\beta h$, hence $\tau$, fixed, then as long as the Wilson loops are confined, they shift the tension of the syndrome by a vanishing amount, and so the phase transition line at the origin must satisfy $\tau_c \ge 1/5$. Meanwhile, the numerical estimate extracted from \cite{somoza2021self} is $\tau_c \approx 0.22$. Thus, gauge averaging puts us within 10\% of the estimated value.

\textbf{Discussion---}In this work, we have introduced gauge averaging as a method for proving the perturbative stability of self-correcting classical memory phases. We find these phases display a mixture of low temperature physics along the gauge-invariant directions of configuration space (convergent Peierls bounds) and high temperature physics along the pure-gauge directions (convergent cluster expansion). They are fluctuation-stabilized phases of matter, in the sense that their perturbative stability increases with $T$ for small $T$. It would be interesting to know if this property is at all related to the fluctuation-stabilized non-equilibrium memories studied in \cite{lake2026squeezingcodesrobustfluctuationstabilized}.

Gauge averaging demonstrates that extensive local symmetries are surprisingly robust in classical thermal ensembles. This can be thought of as a cousin of Elitzur's theorem \cite{elitzur} which says that these symmetries cannot be spontaneously broken. Our theorem says they are even resistant to explicit breaking.

In the case of Wegner gauge theory, the product of many local symmetry generators telescopes to form a boundary symmetry along a surface, which can be interpreted as a 1-form symmetry of the model \cite{Gaiotto_2015}. Thus, gauge averaging gives a precise sense in which the 1-form symmetry re-emerges exactly in $V_{\rm eff}$. It would be very interesting to understand the relationship between this phenomenon and recent proposals for defining emergent 1-form symmetries \cite{Pace_2023,serna2024worldsheetpatching1formsymmetries,liu2026informationtheoreticprincipleemergent1form}.

\textit{Acknowledgements---}I am grateful to Shankar Balasubramanian, Margarita Davydova, Thomas Dumitrescu, Lei Gioia, and Ting-Chun (David) Lin, and Ruben Verresen for providing useful feedback and related collaborations. I am also grateful to acknowledge the support of the Bhaumik Presidential Term Chair.

\bibliography{main}

\appendix
\counterwithout{equation}{section}

\onecolumngrid

\section{Classical Gauge Averaging and Cluster Expansions}

\subsection{Definitions}\label{subsecdefs}

\begin{defin}\label{deflocalsymmetry}
    A local symmetry acting on a set $\Lambda$ of degrees of freedom is a pair $(G,G_0)$ of finite abelian group $G$ and a generating set $G_0$
    \begin{itemize}
        \item For each $x \in \Lambda$ there is some local symmetry $g \in G$ whose support overlaps $x$.
        \item For each operator $V_X$ which is diagonal in the classical basis and supported in $X$ and each $g \in G$, $g V_X g^{-1}$ is also diagonal in the classical basis and supported in $X$.
        \item The symmetry graph $\mathcal{G}$ is connected.
    \end{itemize}
\end{defin}
The first condition is for simplicity of notation. If there are other degrees of freedom that the local symmetry does not act on, we can simply ignore them in the average. The second is for the simple expression of operator supports. Both of these conditions hold for the symmetry of Wegner gauge theory.

We start with an interaction, expressed as a sum over local interactions $V_X$ supported in subsets $X \subset \Lambda$:
\[\label{eqnlocalinteraction}V = \sum_{X \subset \Lambda} V_X\]
By convention we will take $V_\varnothing = 0$, since this constant does not affect thermodynamic quantities. We want to compute the gauge-averaged effective interaction
\[V_{\rm eff} = -\frac{1}{\beta} \log \mathbb{E}_g (ge^{-\beta V}g^{-1})\]
First we write
\[e^{-\beta V} = \prod_{X \subset \Lambda} e^{-\beta V_X} = \prod_{X \subset \Lambda} (1+f(X,\beta)) = \sum_{n=0}^\infty\Sum_{\{X_1,\ldots,X_n\} \subset 2^\Lambda} \prod_{j=1}^n f(X_j,\beta)\]
where $f(X,\beta) = e^{-\beta V_X}-1$ and $2^\Lambda$ is the power set of $\Lambda$. Linearity of expectation gives
\[e^{-\beta V_{\rm eff}} = \sum_{n=0}^\infty\Sum_{\{X_1,\ldots,X_n\} \subset 2^\Lambda} \mathbb{E}_g\left(g \left(\prod_{j=1}^n f(X_j,\beta)\right) g^{-1}\right).\]

For $X \subset \Lambda$, let $G_0(X)$ be the set of local symmetry generators whose support overlaps $X$. We say two $X,Y \subset \Lambda$ are  ``dependent'' if they have a common overlapping symmetry generator:
\[X \sim Y \Leftrightarrow G_0(X) \cap G_0(Y) \neq \varnothing\]
Since $G$ is finite abelian and generated by $G_0$, the uniform measure on $G$ is the pushforward of the product measure on the abstract product of cyclic groups generated by the elements of $G_0$. Thus functions depending on disjoint subsets of $G_0$ will be independent under gauge averaging. Therefore the expectation value above splits into a product over sets of subsets $\gamma=\{X_1,\ldots,X_m\}$ with connected dependency graphs, which is the graph whose vertices are $\{X_i\}_{i=1,\ldots,m}$ and have edges whenever $X_i \sim X_j$.

We call a set of subsets $\gamma \subset 2^\Lambda$ with a connected dependency graph a ``polymer''. Let $P$ be the set of polymers. We say two polymers $\gamma,\gamma' \in P$ are dependent if
\[\gamma \sim \gamma' \Leftrightarrow \exists X \in \gamma, \exists Y \in \gamma', X \sim Y.\]
(Note that many references, including \cite{kotecky1986cluster}, call this relation ``incompatibility'' and denote it with $\not\sim$.) Otherwise we say they are independent. We can then write
\[e^{-\beta V_{\rm eff}} = \sum_{n=0}^\infty \frac{1}{n!}\Sum_{(\gamma_1,\ldots,\gamma_n) \in P^n \\ \text{mutually independent polymers}} \prod_{j=1}^n \Phi(\gamma_j) \\
\Phi(\gamma) = \mathbb{E}_g \left(g \left(\prod_{X \in \gamma} f(X,\beta)\right) g^{-1}\right)\]
Evaluated in the classical basis, where $H_0$ and $V$ are diagonal, for each fixed configuration of the spins, we can consider this as an abstract polymer model in the language of Koteck\'y and Preiss \cite{kotecky1986cluster}.

Under certain conditions stated in \cite{kotecky1986cluster,fernandez2007cluster}, $V_{\rm eff}$ evaluated on a particular spin configuration has an absolutely convergent cluster expansion. We will apply this cluster expansion pointwise over all spin configurations to get a cluster expansion for $V_{\rm eff}$, while convergence results will be uniform in the spin configurations themselves.

A cluster is an ordered tuple of polymers $C=(\gamma_1,\ldots,\gamma_n) \in P^n$ (may include repeats) with a connected dependency graph, a graph whose vertices are the polymers $\gamma_1,\ldots,\gamma_n$, with an edge whenever $\gamma_i \sim \gamma_j$ are dependent. The cluster expansion is, formally
\[\label{eqnclusterexpapp}-\beta V_{\rm eff} = \sum_{n=1}^\infty \frac{1}{n!} \Sum_{C \in P^n \\ \text{cluster} } \Phi^T(C) \\
\Phi^T(C)=\phi^T(C) \prod_{\gamma \in C} \Phi(\gamma)\]
where $\phi^T(C)=1$ if $n=1$, and otherwise
\[\phi^T(C) = \Sum_{F \subset E(C) \\ F\text{ connected, spanning}} (-1)^{|F|}\]
where $E(C)$ is the set of edges in the dependency graph of $C$, and $F$ is a subset of these edges such that every vertex $\gamma \in C$ has at least one edge in $F$ connected to it and the resulting subgraph is connected. See \cite{fernandez2007cluster} for this explicit formula and a nice review of various convergence criteria.

To place $V_{\rm eff}$ into the general form \eqref{eqnlocalinteraction} of an interaction, we define for a cluster $C = (\gamma_1,\ldots,\gamma_n)$ define the ``support'' of a cluster $C$ as
\[S(C) = \bigcup_{j=1}^n \bigcup_{X \in \gamma_j} X\]
and define
\[\label{eqnlocalinteractioneffective}V_{{\rm eff},X} = - \frac{1}{\beta} \sum_{n \ge 1} \frac{1}{n!} \Sum_{C \in P^n \\ \text{cluster, }S(C)=X} \Phi^T(C).\]
Note that because of the second bullet of Definition \ref{deflocalsymmetry}, this operator has support contained in $X$.

To analyze the convergence of the cluster expansion for gauge averaging, we need several more definitions. As in the main text, we consider the bipartite graph $\mathcal{G}$ with vertex and edge sets
\[V(\mathcal{G}) = \Lambda \cup G_0 \\
E(\mathcal{G})= \{(x,g_0)\ |\ \text{$x$ is in the support of $g_0$}\}\]
For $X \subset \Lambda$ we define
\[G_0(X) := \{\text{set of $g_0 \in G_0$ connected to $X$}\}\]
and for a polymer $\gamma \in P$ we define
\[G_0(\gamma) := G_0(\bigcup_{X \in \gamma} X)\]
We also define two measures of the size of a polymer. The first is the number of gauge generators with support overlapping that of $\gamma$:
\[||\gamma|| := |G_0(\gamma)|= |\bigcup_{X \in \gamma} G_0(X)|\]
To define the second, for each $X \subset \Lambda$ we choose a ``connected hull'', which is a set $H(X) \subset \Lambda$ such that $X \subset H(X)$, $H(X) \cup G_0(H(X))$ is connected in $\mathcal{G}$, and $|G_0(H(X))|$ is minimized over all such sets. Also define
\[\label{eqndiameter}H(\gamma) := \bigcup_{X \in \gamma} H(X) \\
l(X) := |G_0(H(X))| \\
l(\gamma) := |G_0(H(\gamma))|\]
For $\eta \in [0,\infty)$ we define the interaction norm
\[|V|_\eta = \sup_{g_0 \in G_0} \Sum_{X \subset \Lambda \\ g_0 \in G_0(X)} ||V_X||_\infty e^{\eta l(X)},\]
where $||V_X||_\infty$ is the maximum value of $|V_X|$ over all configurations of spins in $X$. This norm is very similar to the one used in \cite{Hastings_2006}, with $l(X)$ serving as an abstract measure of the ``diameter'' of $X$. Intuitively, the $e^{\eta l(X)}$ factor captures the effect that larger weight perturbations are able to generate huge weight operators at a lower order in perturbation theory than smaller weight perturbations, giving the former an exponential boost in destructive power. $\eta^{-1}$ determines a length scale over which the operators in $V$ are required to decay so that $|V|_\eta$ remains finite in the thermodynamic limit.

\subsection{Proof of the Gauge Averaging Theorem}

\begin{thm}\label{apptheoremstatement}
    Let $(G,G_0)$ be a local symmetry satisfying Definition \ref{deflocalsymmetry}. Suppose it has the property that each local generator $g_0 \in G_0$ has support overlapping the supports of at most $\Delta$ other local generators. Then for all $0 < b < \eta-m-\log \Delta - 1$, there is a constant $C_{\rm gauge}(\eta,m,b) > 0$ such that if
    \[\beta |V|_\eta < C_{\rm gauge}(\eta,m,b),\]
    then the cluster expansion is absolutely convergent and gives an exponentially-local interaction \eqref{eqnlocalinteractioneffective} with
    \[\beta |V_{\rm eff}|_m < b.\]
\end{thm}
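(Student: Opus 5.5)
Our plan is to verify a Kotecký--Preiss criterion for the abstract polymer model defined above, uniformly over spin configurations, and then read off the norm bound on $V_{\rm eff}$ from the standard tree-graph/KP estimate on sums of clusters. We use polymer weights with a weight function $a(\gamma) = c\,\|\gamma\| + m'\, l(\gamma)$ for constants $c>0$ and $m' \ge m$ to be chosen. The KP condition reads
\[\sum_{\gamma' \sim \gamma} |\Phi(\gamma')| e^{a(\gamma')} \le a(\gamma)\ \text{for all } \gamma.\]
Since dependence is witnessed by a shared generator, it suffices to prove the single-generator bound
\[\sup_{g_0} \sum_{\gamma \ni g_0} |\Phi(\gamma)| e^{a(\gamma)} \le c.\]
Here $\gamma \ni g_0$ means $g_0 \in G_0(\gamma)$. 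This suffices because summing over the $\|\gamma\|$ generators of $\gamma$ gives at most $c\|\gamma\| \le a(\gamma)$.

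\textbf{Bounding polymer weights.} We have $|f(X,\beta)| \le e^{\beta\|V_X\|_\infty}-1 \le \beta\|V_X\|_\infty e^{\beta\|V_X\|_\infty}$. Since $\beta\|V_X\|_\infty \le \beta|V|_\eta$ is small, this gives $|f(X)|\le 2\beta\|V_X\|_\infty$. Gauge averaging is a contraction in sup norm, so $|\Phi(\gamma)| \le \prod_{X\in\gamma} 2\beta\|V_X\|_\infty$.

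Next we need subadditivity: $l(\gamma) \le \sum_{X\in\gamma} l(X)$ and $\|\gamma\| \le \sum_{X} l(X)$. The second holds since $G_0(X)\subset G_0(H(X))$. The first is the delicate point. The union of hulls is connected because the dependency graph is connected, so overlapping generator sets glue the hulls together. We therefore take $H(\gamma)$ to be the union of the hulls. If a minimal hull of the union is needed, the union is an admissible candidate, so the inequality holds.

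With subadditivity, each $X\in\gamma$ contributes a factor $2\beta \|V_X\|_\infty e^{(c+m')l(X)}$. The sum over connected $\gamma$ containing $g_0$ is then controlled by a tree (Penrose/BFS) expansion of the dependency graph of $\gamma$. Each new $X$ must touch a generator already in $G_0(\gamma)$. The branching choices are bounded by $l(X)$ generators on each set, and each generator neighbours at most $\Delta$ others. We spend a factor $e^{(1+\log\Delta) l(X)}$ on this combinatorics, using $\sum_k l^k/k! \le e^{l}$ type estimates. The total then resembles $\sum_n (2\beta|V|_{\eta})^n$, provided $c + m' + 1 + \log\Delta \le \eta$. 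This geometric series is at most $c$ once $\beta|V|_\eta < C_{\rm gauge}$. This is the step we expect to be the main obstacle: organizing the counting of connected families of sets so that the entropy is exactly $e^{(1+\log\Delta)l}$ per set. We would first try the standard bound that the number of connected subgraphs of size $l$ containing a vertex in a degree-$\Delta$ graph is at most $(e\Delta)^l$.

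\textbf{From KP to the norm bound.} The Kotecký--Preiss theorem then gives absolute convergence of \eqref{eqnclusterexpapp}. It also gives, for each polymer $\gamma_0$,
\[\sum_{C \ni \gamma_0\text{-dependent}} |\Phi^T(C)|/n! \le a(\gamma_0).\]
The refined version with weight $e^{m' l(S(C))}$ included holds because of the additional $m' l$ in $a$. Summing over clusters with $g_0 \in G_0(S(C))$, we get $\beta|V_{\rm eff}|_m \le \sum_{\gamma\ni g_0}|\Phi(\gamma)|e^{a(\gamma)}\le c$, where we use $l(S(C))\le\sum_i l(\gamma_i)$ by the same gluing argument.

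\textbf{Choosing parameters.} We choose $m'=m$ and $c=b$. The condition $b<\eta-m-\log\Delta-1$ leaves positive slack, which absorbs the factor $2$ and the geometric series, and this fixes $C_{\rm gauge}(\eta,m,b)$. Symmetry of each $V_{{\rm eff},X}$ holds because each $\Phi(\gamma)$ is a gauge average. Locality holds by the second bullet of Definition \ref{deflocalsymmetry}.
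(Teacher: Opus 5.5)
Your proposal is correct, and its skeleton is the paper's. Like the paper, you reduce Koteck\'y--Preiss to the pinned bound $K_{b,m}=\sup_{g_0}\sum_{\gamma:\,g_0\in G_0(\gamma)}\|\Phi(\gamma)\|_\infty e^{b\|\gamma\|+ml(\gamma)}<b$. You use the same ingredients: $\|\Phi(\gamma)\|_\infty\le\prod_{X\in\gamma}\rho_X$ with $\rho_X\le 2\beta\|V_X\|_\infty$, and $\|\gamma\|\le l(\gamma)\le\sum_{X\in\gamma}l(X)$. You finish the same way, with $\beta|V_{\rm eff}|_m\le K_{b,m}$ via the Fern\'andez--Procacci bound $\Pi_\gamma\le e^{a_\gamma}$.

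The genuine difference is how the pinned sum over polymers is controlled. The paper groups polymers by their connected generator hull $J(\gamma)$ and counts hulls of size $l$ through $g_0$ by $(e\Delta)^l$. It pays for this count with $e^{-(\eta-b-m)l}$ split off from $\prod_X e^{\eta l(X)}$, and bounds all polymers inside a fixed hull at once by $e^{lq_\eta}-1$. That is a two-line estimate with no tree bookkeeping, and it is exactly where the $\log\Delta+1$ in the hypothesis comes from.

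Your spanning-tree route, run as a KP-type induction on individual sets, needs no lattice-animal count at all. A child $Y$ of $X$ only has to share one of the $|G_0(X)|\le l(X)$ generators of $X$, and the sum over all $Y$ through a fixed generator is already controlled by $|V|_\eta$. Concretely, let $w_X=2\beta\|V_X\|_\infty e^{(b+m)l(X)}$, and let $T(X)$ be the weighted sum over labeled rooted trees at $X$ in which each child shares a generator with its parent. Then $T(X)\le w_X\exp\bigl(\sum_{Y\sim X}T(Y)\bigr)$, with the $1/k!$ coming from unordered children. Induction on depth gives $T(X)\le w_Xe^{a|G_0(X)|}$, and hence a pinned sum of at most $a$, whenever $\sup_g\sum_{X:\,g\in G_0(X)}w_Xe^{a|G_0(X)|}\le a$. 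Taking $a=\min(b,\eta-b-m)$ and $\beta|V|_\eta<\min(a/2,\log 2)$ closes the argument under the weaker condition $\eta>b+m$, with no dependence on $\Delta$.

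So your per-set factor $(e\Delta)^{l(X)}$, the remark that each generator neighbours at most $\Delta$ others, and the animal bound you would ``first try'' are not part of your route. They re-count entropy already contained in $|V|_\eta$; the animal bound belongs to the paper's route, applied to the whole hull $J(\gamma)$ rather than per set. Spending that factor is harmless under the stated hypothesis, but it hides what actually makes the estimate close.

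Three small repairs are needed.

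First, the weight $e^{ml}$ must enter as the KP function $d(\gamma)=ml(\gamma)$, equivalently in $\rho_\gamma=\|\Phi(\gamma)\|_\infty e^{ml(\gamma)}$, with $a(\gamma)=b\|\gamma\|$. A condition $\sum_{\gamma'\sim\gamma}|\Phi(\gamma')|e^{a(\gamma')}\le a(\gamma)$ with $ml$ placed inside $a$ does not by itself give the $e^{m\sum_i l(\gamma_i)}$-weighted cluster bound. You are saved only because your pinned estimate already yields $\sum_{\gamma'\sim\gamma}\|\Phi(\gamma')\|_\infty e^{c\|\gamma'\|+m'l(\gamma')}\le c\|\gamma\|$, which is the stronger condition required.

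Second, each new set must attach to its parent in a spanning tree. Attaching to ``a generator already in $G_0(\gamma)$'' of the growing union would overcount factorially.

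Third, choose $C_{\rm gauge}$ so the pinned sum is strictly below $b$. That gives $\beta|V_{\rm eff}|_m<b$ rather than $\le b$.
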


\begin{proof}
In order to prove convergence, we will consider the abstract polymer model
\[\sum_{n = 0}^\infty \frac{1}{n!} \Sum_{(\gamma_1,\ldots,\gamma_n) \in P^n \\ \text{mutually independent polymers}} \prod_{j=1}^n z(\gamma_j) \\
z(\gamma_j) = ||\Phi(\gamma_j)||_\infty.\]
The absolute cluster expansion of this model dominates the cluster expansion \eqref{eqnclusterexpapp} for every spin configuration. Therefore, absolute convergence for this cluster expansion will prove uniform absolute convergence for \eqref{eqnclusterexpapp}.

Define for each polymer $\gamma \in P$ the ``KP activity''
\[K_{\gamma,b} = \sum_{\gamma' \in P: \gamma' \sim \gamma} ||\Phi(\gamma')||_\infty e^{ml(\gamma') + b||\gamma'||}\]
The ``KP criterion'' is
\[\label{eqnKPcrit}\forall \gamma, K_{\gamma,b} \le b||\gamma|| \qquad \text{KP criterion}\]
Under this condition, absolute convergence of the cluster expansion is guaranteed by the main theorem of \cite{kotecky1986cluster} with $a(\gamma) = b||\gamma||$ and $d(\gamma) = m l(\gamma)$. In the notation of \cite{fernandez2007cluster}, which will be useful below, we have, for $\gamma \in P$
\[\rho_\gamma = ||\Phi(\gamma)||_\infty e^{ml(\gamma)} \\
a_\gamma = b ||\gamma||\]
With this notation \eqref{eqnKPcrit} is Eq. 2.15 of \cite{fernandez2007cluster}.

We want to derive the KP criterion \eqref{eqnKPcrit} from our assumptions. It will be easier to upper bound the ``pinned'' KP activity
\[K_{b,m} = \sup_{g_0 \in G_0} \Sum_{\gamma \in P \\ g_0 \in G_0(\gamma)} ||\Phi(\gamma)||_\infty e^{b||\gamma|| + ml(\gamma)}.\]
First we will show that if $K_{b,m} \le b$, then the KP criterion \eqref{eqnKPcrit} is satisfied. Indeed, if $\gamma' \sim \gamma$, then there is some $g_0 \in G_0(\gamma) \cap G_0(\gamma')$. Thus,
\[K_{\gamma,b} \le \sum_{g_0 \in G_0(\gamma)} \sum_{\gamma': g_0 \in G_0(\gamma')} e^{b||\gamma'||+ml(\gamma')} ||\Phi(\gamma')||_\infty \le ||\gamma|| K_{b,m}.\]
Thus, if we can show
\[\label{eqnpinnedcriterion}K_{b,m} \le b \quad \text{pinned KP criterion}\]
then the KP criterion \eqref{eqnKPcrit} will be satisfied.

Now to upper bound $K_{b,m}$. Let
\[\rho_X = ||f(X,\beta)||_\infty = ||e^{-\beta V_X}-1||_\infty\]
Since $||\Phi(\gamma)||_\infty \le \prod_{X \in \gamma} \rho_X$, we have
\[K_{b,m} \le \sup_{g_0 \in G_0} \Sum_{\gamma \in P \\ g_0 \in G_0(\gamma)} e^{b||\gamma|| + ml(\gamma)} \prod_{X \in \gamma} \rho_X.\]
Note that
\[l(\gamma) \le \sum_{X \in \gamma} l(X) \\
||\gamma|| \le l(\gamma)\]
so for all $\eta \ge 0$,
\[K_{b,m} \le \sup_{g_0 \in G_0} \Sum_{\gamma \in P \\ g_0 \in G_0(\gamma)} e^{-(\eta-b-m) l(\gamma)} \left(\prod_{X \in \gamma} \rho_X e^{\eta l(X)}\right)\]
We now want to group the sum over polymers with a fixed $l(\gamma)$.

Recall this is defined using connected hulls $H(\gamma) \subset \Lambda$ in \eqref{eqndiameter}. For this next bound, we want to use ``connected'' sets of local symmetry generators. Define, for $J \subset G_0$ the set $\Lambda(J)$ of spins in the support of elements in $J$. We will say $J$ is ``connected'' if $J \cup \Lambda(J)$ is connected in $\mathcal{G}$. We define, for each $X \subset \Lambda$ and polymer $\gamma$,
\[J(X) = G_0(H(X)) \\
J(\gamma) = \bigcup_{X \in \gamma} J(X)\]
Note that $J(\gamma)$ is connected since each $J(X)$ is, and since $X \sim Y$ implies $J(X) \cap J(Y) \neq \varnothing$. We have by definition
\[l(\gamma) = |J(\gamma)|.\]
Let
\[\mathcal{J}(l)=\{J \subset G_0\ |\ J\text{ connected and }|J|=l\}.\]
We have
\[K_{b,m} \le \sum_{l=1}^\infty e^{-(\eta-b-m) l} \sup_{g_0 \in G_0}\Sum_{J \in \mathcal{J}(l) \\ g_0 \in J} \Sum_{\gamma \in P \\ J(\gamma)=J} \prod_{X \in \gamma} \rho_X e^{\eta l(X)}\]
Define
\[\mathcal{X}(J) = \{X \neq \varnothing\ |\ J(X) \subset J\}.\]
Polymers $\gamma$ with $J(\gamma) = J$ must be made of sets $X \in \mathcal{X}(J)$. We can thus replace the sum over $\gamma$ with a sum over (nonempty) subsets of $\mathcal{X}(J)$:
\[K_{b,m} \le \sum_{l=1}^\infty e^{-(\eta-b-m) l} \sup_{g_0 \in G_0}\Sum_{J \in \mathcal{J}(l) \\ g_0 \in J} \Sum_{\Gamma \in 2^{\mathcal{X}(J)} \\ \Gamma \neq \varnothing} \prod_{X \in \Gamma} \rho_X e^{\eta l(X)}\]
We can replace the sum of products with an exponential of a sum:
\[\Sum_{\Gamma \in 2^{\mathcal{X}(J)} \\ \Gamma \neq \varnothing} \prod_{X \in \Gamma} \rho_X e^{\eta l(X)} \le \exp\left( \Sum_{X \in \mathcal{X}(J)} \rho_X e^{\eta l(X)} \right) - 1\]
Since $G_0(X) \subset J(X) \subset J$,
\[\sum_{X \in \mathcal{X}(J)}\rho_X e^{\eta l(X)}  \le \sum_{g \in J} \sum_{X: g \in G_0(X)} \rho_X e^{\eta l(X)} \le l q_\eta\]
where we used $l = |J|$ and defined
\[q_\eta = \sup_{g_0 \in G_0} \sum_{X:g_0 \in G_0(X)} \rho_X e^{\eta l(X)}.\]
Note that since $\rho_X \le e^{\beta ||V_X||}-1$,
\[\label{eqnqestimate}q_\eta \le 2 \beta|V|_\eta \quad \text{whenever} \quad \beta |V|_\eta \le \log 2 \]
and in particular goes to zero as $\beta|V|_\eta \to 0$. It will serve as an intermediate measure of the interaction norm.

Putting it all together we have
\[K_{b,m} \le \sum_{l=1}^\infty A(l) e^{-(\eta-b-m)l} (e^{q_\eta l}-1) \\
A(l) = \sup_{g_0 \in G_0} |\{J \in \mathcal{J}(l) : g_0 \in J\}|\]
This bound is true without any assumptions on $\mathcal{G}$. If however $\mathcal{G}$ has bounded symmetry degree $\Delta$, then (see Lemma 9 of \cite{mcdiarmid2021componentstructuredenserandom})
\[A(l) \le (e \Delta)^l\]
Write
\[\alpha = e^{(\log \Delta + 1)-(\eta-b-m)}\]
For whenever
\[\eta-m > \log \Delta + 1\]
then for all $b \in (0,\eta-m-\log \Delta-1)$ we have $\alpha < 1$. In this case, there is some constant $C''(\eta,m,b)$ such that
\[q_\eta < C''(\eta,m,b) \implies \alpha e^{q_\eta} < 1\]
which then implies (summing the above inequality)
\[K_{b,m} \le \frac{\alpha e^{q_\eta}}{1- \alpha e^{q_\eta}} - \frac{\alpha}{1-\alpha}\]
This is a continuous function of $q_\eta$ and goes to zero as $q_\eta \to 0$, therefore there is a another constant $C'(\eta,m,b) \le C''(\eta,m,b)$ such that for all $b \in (0,\eta-m-\log \Delta -1)$,
\[q_\eta < C'(\eta,m,b) \implies K_{b,m} < b.\]
and thus by the argument above \eqref{eqnpinnedcriterion} the KP criterion will be satsified. We will take the constant in the theorem statement to be
\[C_{\rm gauge}(\eta,m,b) = \min(\log 2, C'(\eta,m,b)/2)\]
since by \eqref{eqnqestimate}
\[\beta |V|_\eta < C_{\rm gauge}(\eta,m,b) \implies q_\eta < C'(\eta,m,b)\]

We now wish to show the bound on the interaction norm of $V_{\rm eff}$. This essentially follows the analysis in \cite{fernandez2007cluster}.
We want to compute
\[|V_{\rm eff}|_m=\sup_{g_0 \in G_0} \Sum_{X \subset \Lambda \\ g_0 \in G_0(X)} ||V_{{\rm eff},X}||_\infty e^{m l(X)}\]
The local form of the cluster expansion is given in \eqref{eqnlocalinteractioneffective}. Note that
\[||\sum_{C:S(C)=X} \Phi^T(C)||_\infty \le \sum_{C:S(C)=X} ||\Phi^T(C)||_\infty \\
G_0(S(C)) = \bigcup_{j=1}^n G_0(\gamma_j) \\ l(S(C)) \le \sum_{j=1}^n l(\gamma_j),\]
where the last holds because $\bigcup_{j=1}^n H(\gamma_j) \cup G_0(H(\gamma_j))$ is connected and contains $S(C)$. Putting these together, we have
\[\beta |V_{\rm eff}|_m \le \sup_{g_0 \in G_0} \sum_{n = 1}^\infty \frac{1}{n!} \Sum_{(\gamma_1,\ldots,\gamma_n) \in P^n \\ \text{cluster}} ||\Phi^T(C)||_\infty e^{m \sum_{j=1}^n l(\gamma_j)} \sum_{j=1}^\infty \textbf{1}_{g_0 \in G_0(\gamma_j)} \]
where $\textbf{1}_{g_0 \in G_0(\gamma_j)}$ is a function which is 1 if $g_0 \in G_0(\gamma_j)$ and 0 otherwise. By permutation symmetry of the $\gamma_j$, we can absorb the sum over $j$ of the indicator functions:
\[\beta |V_{\rm eff}|_m \le \sup_{g_0 \in G_0} \sum_{n=1}^\infty \frac{1}{(n-1)!}\Sum_{(\gamma_1,\ldots,\gamma_n) \in P^n \\ \text{cluster, } g_0 \in G_0(\gamma_1)} ||\Phi^T(\gamma_1,\ldots,\gamma_n)||_\infty e^{m\sum_{j=1}^n(\gamma_j)} \]
Furthermore,
\[||\Phi^T(\gamma_1,\ldots,\gamma_n)||_\infty \le |\phi^T(\gamma_1,\ldots,\gamma_n)| \prod_{j=1}^n ||\Phi(\gamma_j)||_\infty\]
So separating off $\gamma_1$, we get
\[\beta |V_{\rm eff}|_m \le \sup_{g_0 \in G_0} \sum_{\gamma_1: g_0 \in G_0(\gamma_1)} ||\Phi(\gamma_1)||_\infty e^{ml(\gamma_1)} \left(1 + \sum_{n=1}^\infty \frac{1}{n!} \sum_{(\gamma_2,\ldots,\gamma_{n+1}) \in P^n} |\phi^T(\gamma_1,\ldots,\gamma_{n+1})| \prod_{j=2}^{n+1} ||\Phi(\gamma_j)||_\infty e^{ml(\gamma_j)} \right)\]
Recalling $\rho_\gamma = ||\Phi(\gamma)||_\infty e^{m l(\gamma)}$, the expression in parentheses is called $\Pi_{\gamma_1}$ in \cite{fernandez2007cluster} Eq. 2.7. They show under the KP criterion that
\[\rho_\gamma\Pi_\gamma \le \rho_\gamma e^{a_\gamma}\]
(see Eq. 2.14 with $\mu_\gamma = \rho_\gamma e^{a_\gamma}$, and also see below Eq. 2.15). This gives
\[\beta |V_{\rm eff}|_m \le \sup_{g_0 \in G_0} \sum_{\gamma: g_0 \in G_0(\gamma)} ||\Phi(\gamma)||_\infty e^{m l(\gamma) + b||\gamma||} = K_{b,m}\]
With our assumptions above we have $K_{b,m} < b$ proving the result.

\end{proof}

\section{Proof of Stability of Wegner Gauge Theory}

Let $\Omega$ be a set of configurations, $P$ the transition matrix of a Markov chain on $\Omega$, and $\mu$ a distribution on $\Omega$. We say $P$ is $\mu$-reversible if it satisfies detailed balance with respect to $\mu$, i.e.
\[\forall \sigma,\sigma' \in \Omega \quad \mu(\sigma) P(\sigma,\sigma') = \mu(\sigma') P(\sigma',\sigma).\]
We define the mixing time as
\[t_{\rm mix} = \inf\{t: \max_{\sigma \in \Omega} ||P^t(\sigma) - \mu||_{TV} \le 1/4\}, \qquad \inf(\varnothing)=\infty\]
where $P^t(\sigma)$ is $P$ applied $t$ times to the initial state $\sigma$ and $||\cdot||_{TV}$ is the total variation distance. The bottleneck lemma says the following:
\begin{lemma}\label{lemmabottleneck} \textbf{Bottleneck lemma:}
    Suppose $A \subset \Omega$ has $0<\mu(A) \le 1/2$ and that there is $\partial A \subset A^c$ such that if $\sigma' \notin A \cup \partial A$, then for all $\sigma \in A$,
    \[\label{eqnbottleneckcondition}P(\sigma,\sigma') = 0.\]
    Then if $P$ is $\mu$-reversible,
    \[t_{\rm mix} \ge \frac14 \frac{\mu(A)}{\mu(\partial A)}\]
\end{lemma}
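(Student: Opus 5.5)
The plan is the standard conductance argument, using reversibility to control the probability flow out of $A$. Let $\pi_t$ denote the law of the chain at time $t$ when started from $\mu$ conditioned on $A$, i.e. $\pi_0 = \mu_A := \mu(\cdot \cap A)/\mu(A)$. Since the mixing time is defined as a maximum over deterministic initial states and total variation distance is convex in the initial distribution, it suffices to show that $\|\pi_t - \mu\|_{TV} > 1/4$ whenever $t < \frac14 \mu(A)/\mu(\partial A)$. Indeed, if every $P^t(\sigma)$ were within $1/4$ of $\mu$, then so would be their mixture $\pi_t = \sum_{\sigma \in A} \mu_A(\sigma) P^t(\sigma)$.

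The key step is to bound the escape probability per step. Define the edge flow
\[Q(A,A^c) = \sum_{\sigma \in A, \sigma' \in A^c} \mu(\sigma) P(\sigma,\sigma').\]
By hypothesis \eqref{eqnbottleneckcondition}, only $\sigma' \in \partial A$ contribute. Detailed balance then gives
\[Q(A,A^c) = \sum_{\sigma' \in \partial A} \sum_{\sigma \in A} \mu(\sigma') P(\sigma',\sigma) \le \mu(\partial A).\]
Next I bound $\pi_t(A^c)$. Since $\mu$ is stationary and $\pi_0 \le \mu/\mu(A)$ pointwise, the same domination $\pi_s \le \mu/\mu(A)$ holds for all $s$, because $P$ preserves this order with $\mu$ fixed. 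The probability of moving from $A$ to $A^c$ at step $s$ is therefore at most
\[\sum_{\sigma \in A} \pi_s(\sigma) P(\sigma, A^c) \le Q(A,A^c)/\mu(A).\]
A union bound over steps gives
\[\pi_t(A^c) \le t\, \mu(\partial A)/\mu(A).\]
One point needs care: $\pi_t(A^c)$ is not only first exits, since the chain can leave and re-enter $A$. I will therefore track the flow into $A^c$ at each step, since $\pi_{s+1}(A^c) - \pi_s(A^c) \le$ (flow from $A$ to $A^c$ under $\pi_s$). This increment bound, combined with the domination, yields the same estimate.

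To conclude, for $t < \frac14 \mu(A)/\mu(\partial A)$ we get $\pi_t(A) > 3/4$, while $\mu(A) \le 1/2$. Hence
\[\|\pi_t - \mu\|_{TV} \ge \pi_t(A) - \mu(A) > 1/4,\]
so $t_{\rm mix} \ge t$.

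The main obstacle is justifying the domination $\pi_s \le \mu/\mu(A)$ cleanly. This follows because $P$ is a positive operator with $\mu P = \mu$, so $\nu \le c\mu$ implies $\nu P \le c\mu$. The rest is bookkeeping. Degenerate cases, such as $\mu(\partial A) = 0$ giving an infinite bound, are consistent with the convention $\inf \varnothing = \infty$, since then $A$ is closed under the dynamics.
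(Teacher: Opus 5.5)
Your proof is correct and follows the paper's route. The key step $Q(A,A^c)=Q(A,\partial A)=Q(\partial A,A)\le\mu(\partial A)$, obtained from the exit hypothesis and detailed balance, is exactly the paper's; the only difference is that the paper cites the bottleneck-ratio bound (Theorem 7.4 of Levin--Peres), whereas you prove it inline by the standard argument (start from $\mu$ conditioned on $A$, use $\pi_s\le\mu/\mu(A)$ to bound the per-step leakage, then compare $\pi_t(A)>3/4$ with $\mu(A)\le 1/2$), which also covers the degenerate case $\mu(\partial A)=0$.
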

\begin{proof}
    For $A,B \subset \Omega$ we can consider the conductance
    \[Q(A,B) = \sum_{\sigma \in A, \sigma' \in B} \mu(\sigma)P(\sigma,\sigma').\]
    If every exit from $A$ lands in $\partial A$, then
    \[Q(A,A^c) = Q(A,\partial A).\]
    By $\mu$-reversibility,
    \[Q(A,\partial A) = Q(\partial A,A) \le \mu(\partial A)\]
    By Theorem 7.4 of \cite{levin2026markov}, we thus have
    \[t_{\rm mix} \ge \frac14 \frac{\mu(A)}{\mu(\partial A)}.\]
\end{proof}

\begin{defin}\label{definrangeRmarkov}
    A ball of radius $R$ is a set $B(R,v)$ where $v$ is a fixed lattice vertex, and $B(R,v)$ is the set of edges which lie along a lattice path of length $\le R$ starting at $v$.

    A Markov chain $P$ has range $R$ if for all $\sigma,\sigma' \in \Omega$ with $P(\sigma,\sigma') \neq 0$, there is a $v$ such that $\sigma$ and $\sigma'$ are equal outside $B(R,v)$.
\end{defin}

We need to modify the definition of the bottles $A[C]$ and bottlenecks $\partial A[C]$ from the main text to satisfy the condition \eqref{eqnbottleneckcondition} for a general range $R$ Markov chain. For this, we need a lemma:

\begin{lemma}\label{lemmahomology}\textbf{Homological stability lemma:}
    For every fixed $R$ there are $\varepsilon > 0$, $M$, and $L_0$ such that for all configurations $S$ and $S'$ of Wegner gauge theory on a 3-torus of side length $L > L_0$, which are equal outside of a ball of radius $R$, the following are true:
    \begin{enumerate}
        \item If each component of $\partial S$ has length $< \varepsilon L$, then all but at most $M$ components of $\partial S'$ have length $< \varepsilon L$.
        \item The combined length of these ``long loops'' is less than $L/2$.
        \item The decoded homology classes are equal: $[C(S)] = [C(S')]$.
    \end{enumerate}
\end{lemma}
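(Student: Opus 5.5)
The plan is to localize everything inside the ball $B = B(R,v)$ where $S$ and $S'$ differ, writing $S' = S + \delta$ with $\delta$ a 2-chain on the dual lattice supported on the dual plaquettes of edges in $B$. Then $\partial S' = \partial S + \partial\delta$, and $\partial\delta$ is supported in a dual region of diameter $O(R)$ around $v$, containing at most $N(R) = O(R^3)$ dual edges. Components of $\partial S$ that avoid a fixed $O(R)$-neighborhood $U$ of this region are unchanged and remain components of $\partial S'$, so they are short.

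\emph{Items 1 and 2.} The new components of $\partial S'$ are built from two kinds of pieces: components of $\partial S$ that meet $U$, each of length $<\varepsilon L$, together with at most $N(R)$ dual edges of $\partial\delta$. Since $U$ has $O(R^3)$ vertices, at most $K(R)$ distinct components of $\partial S$ meet $U$. Each new component is contained in the union of some of these, so the total length of all new components is at most $K(R)\varepsilon L + N(R)$. The number of new components is also bounded: each one either contains an edge of $\partial\delta$ or is a piece of an old component touching $U$. This gives at most $M := K(R) + N(R)$ components, so item 1 holds. Choosing $\varepsilon < 1/(4K(R))$ and $L_0$ with $N(R) < L_0/4$ makes the combined length of the long loops less than $L/2$, which is item 2.

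\emph{Item 3.} This is the main obstacle. It requires fixing what ``decoded class'' means for $S'$, and I will take the natural extension of the main-text construction. Every component of $\partial S'$ has length $< L/2$ (by item 2), so it lies in a contractible cube of side $< L/2$, and we fill it with some $D_i'$; the class $[S' + \sum_i D_i']$ is then well defined. I would argue as follows. The union of $\operatorname{supp}\delta$ with the bounding cubes of all old components meeting $U$ lies in a single region $W$ of diameter $\le K(R)\varepsilon L + O(R) < L/2$. Such a region is contained in a contractible cube $Q$ of side $< L$ in $T^3$. Inside $Q$, the chain $\delta + \sum_{\text{old near}} D_i - \sum_{\text{new}} D_j'$ is a 2-cycle, because its boundary cancels: $\partial\delta$ plus the old near components equals the new components. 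A 2-cycle supported in a contractible cube is null-homologous. Far components keep the same fillings $D_i$, so $[C(S')] - [C(S)]$ is the class of this cycle, which is $0$.

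The delicate point in item 3 is that the filling cubes of the new components must also lie in a common contractible region with the old ones. I would handle this by noting that every new component lies inside $W$, and choosing $D_j'$ within the cube $Q$ is legitimate because, by independence of filling choices, any filling in a contractible cube gives the same class. If this diameter bookkeeping is too tight, I would shrink $\varepsilon$ further.
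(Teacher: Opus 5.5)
Your proof is correct and follows essentially the paper's argument: you localize the change to the ball, bound the number of affected components by $O(R^3)$ and their total length by $O(R^3)\varepsilon L + O(R^3)$, and then show that $C(S)+C(S')$ is a $2$-cycle supported in a single contractible cube. The only difference is in item 3. You re-choose the new fillings inside $Q$ and appeal to independence of the filling. The paper instead notes directly that the canonical fillings $D(\gamma_j')$ already lie in bounding cubes of side $<L/4$ meeting the ball, so everything fits in one cube of side $L/2+2R+2$, and no independence across different cubes is needed.
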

\begin{proof}
    There is a maximum number $a_0R^3$ of disjoint dual lattice loops (which occupy plaquettes) which neighbor an edge in $B$. We can take $M = a_0 R^3$, then 1 will automatically be satisfied, because only these loops may differ between $\partial S$ and $\partial S'$.
    
    Furthermore, if all loops in $\partial S$ have length $< \varepsilon L$, the total length of such loops neighboring an edge in $B$ must be $\le a_0 R^3 \varepsilon L$. The most their length can be after an update is therefore
    \[\le a_0 R^3 \varepsilon L + b_0 R^3\]
    for another constant $b_0$. For each fixed $R$, we can thus choose $\varepsilon$ independent of $L$ so that for $L > 3b_0 R^3$, this is less than $L/2$.

    For stability of the decoded homology class, let $\gamma_i$, $i=1,\ldots, k \le M$ denote the set of components of $\partial S$ neighboring an edge in $B$, and let $\gamma_i'$, $i=1,\ldots,k' \le M$ denote the set of components of $\partial S'$ neighboring an edge in $B$. All other components of $\partial S$ and $\partial S'$ are the same, so the difference of the decoded 2-cycles is
    \[C(S)+C(S')= S + S'+ \sum_{i=1}^k D(\gamma_i) + \sum_{j=1}^{k'} D(\gamma_j').\]
    $S+S'$ is supported in $B$ because that's where the update occurs. Furthermore, each $D(\gamma_i)$ or $D(\gamma_j')$ is supported in a bounding cube of side length $<L/4$, and all of these cubes intersect the ball of radius $R$. Their union is thus contained in a cube of side length $L/2 + 2R+2$, where the $+2$ accounts for the shift between the lattice and its dual. This is contractible as long as $L > 4R+4$. So let us choose
    \[L_0 = \max(4R+4,3b_0R^3)\]
    In this case, $C(S) + C(S')$ must be a boundary, so point 3 is satisfied.
\end{proof}

With this lemma in mind, we define, for each $R$, $A[C]$ to be the set of configurations $S$ with $\partial S$ having all components of length $< \varepsilon L$ (call these ``short''), and $S$ having decoded homology class $[C]$, and $\partial A[C]$ to be the set of $S$ where all but $k \in [1,M]$ components of $\partial S$ are short, and the rest have length at least $\varepsilon L$, but with total length less than $L/2$, and having decoded homology class $[C]$. By Lemma \ref{lemmahomology}, any range $R$ Markov chain must pass through $\partial A[C]$ to leave $A[C]$.

\begin{thm}\label{thmWegnermemoryapp} \textbf{Stability of Self-Correcting Memory:}
    Let $\Lambda^L$ denote the $L \times L \times L$ cubic lattice with periodic boundary conditions, and $H_0^L$ the Wegner gauge theory on this lattice. Let $\beta = 1/T$. For all
    \[0 < T < 1/\log 5 \qquad \eta > \log 6 + 1\]
    there exists $C_{\rm stab}(T,\eta) > 0$ with the following property. For every $R$ there exist constants
    \[A_0(T,\eta,R) > 0, \qquad \lambda(T,\eta,R) > 0, \qquad L_0(T,\eta,R)\]
    such that for every $L \ge L_0$, every interaction $V$ on $\Lambda^L$ satisfying
    \[\beta |V|_\eta < C_{\rm stab}(T,\eta)\]
    and every range $R$ Markov chain on $\Lambda^L$ which is $\mu$-reversible, where $\mu \propto \exp(-\beta(H_0^L+V))$, its mixing time satisfies
    \[t_{\rm mix} > A_0 e^{\lambda L}\]
\end{thm}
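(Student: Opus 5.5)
Fix $T<1/\log 5$ and $\eta>\log 6+1$. Choose $m$ and $b$ with $0<b<\eta-m-\log 6-1$, where $m>0$ will be large enough for the perimeter estimate below (if necessary with $m$ close to $\eta-\log 6-1$ and $b$ tiny). Set $C_{\rm stab}\le C_{\rm gauge}(\eta,m,b)$, to be shrunk further below. By Theorem \ref{apptheoremstatement} with $\Delta=6$, the gauge-averaged state $\mu^G\propto e^{-\beta(H_0+V_{\rm eff})}$ has a symmetric $V_{\rm eff}$ with $\beta|V_{\rm eff}|_m<b$; moreover $b$ can be made arbitrarily small by shrinking $C_{\rm stab}$, since $K_{b,m}\to 0$ as $q_\eta\to 0$. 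The sets $A[C]$ and $\partial A[C]$ defined after Lemma \ref{lemmahomology} depend only on $\partial S$ and the decoded class, so they are $G$-invariant and \eqref{eqnbottleneckequality} holds. By Lemma \ref{lemmahomology}, every range $R$ chain leaving $A[C]$ enters $\partial A[C]$. The eight sets $A[C]$ are disjoint, so some $[C]$ has $\mu(A[C])\le 1/2$ (nonemptiness is clear). Lemma \ref{lemmabottleneck} then reduces the theorem to showing $\mu^G(\partial A[C])\le e^{-\lambda L}\mu^G(A[C])/(4A_0)$.

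For the Peierls step, take $S\in\partial A[C]$ with long loops $\Gamma_1,\dots,\Gamma_k$, where $k\le M$, each of length $\ge\varepsilon L$, with total length $<L/2$. Map $S\mapsto S'=S+\sum_i D(\Gamma_i)\in A[C]$. The bounding cubes are contractible, so the decoded class is unchanged, and $\partial S'$ is $\partial S$ with the long loops removed. This map is injective given $(\Gamma_1,\dots,\Gamma_k)$. The weight ratio is $e^{-\beta\sum|\Gamma_i|}\,e^{-\beta[V_{\rm eff}(S)-V_{\rm eff}(S')]}$, and counting loops gives $\le L^3 5^{l}$ choices of a loop of length $l$. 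The theorem therefore follows from a perimeter bound
\[\beta|V_{\rm eff}(S)-V_{\rm eff}(S')|\le \beta C_{\rm perim}(m)|V_{\rm eff}|_m\sum_i|\Gamma_i|+\delta_L,\]
with $\delta_L$ exponentially small in $L$. Given this bound, choosing $b$ small makes $5e^{-\beta}e^{C_{\rm perim}b}<1$, which is possible because $5e^{-\beta}<1$. Summing over $k\le M$ long loops, each of length $\ge\varepsilon L$, gives a bound $\lesssim L^{3M}(5e^{-\beta+C_{\rm perim}b})^{\varepsilon L}$. This yields $\lambda$, $A_0$, and $L_0$ depending on $R$ through $\varepsilon$ and $M$.

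The perimeter bound is the main obstacle. Each term $V_{{\rm eff},X}$ is $G$-symmetric, so it is a function of the plaquette variables $B_p$ in $X$ together with Wilson loops through $X$. A term changes under $S\to S+D(\Gamma)$ only if either $X$ meets a plaquette near $\Gamma$, where $B_p$ changes, or $X$ supports a Wilson loop linking $\Gamma$ nontrivially. The second case needs the gauge-invariant operator on $X$ to detect the flux through $D(\Gamma)$, which forces the connected hull of $X$ to reach $\Gamma$ itself. I would justify this by gauge-fixing the part of $D(\Gamma)$ away from $\Gamma$: on a region avoiding $\Gamma$, $D(\Gamma)$ is pure gauge, i.e. equal to $\partial$ of a vertex set. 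The exception is $X$ wrapping the torus, which forces $l(X)\gtrsim L$ and contributes $e^{-mL}$-type terms, which is $\delta_L$. So every contributing $X$ has $G_0(H(X))$ containing a generator within $O(1)$ of $\Gamma$. Summing $\|V_{{\rm eff},X}\|_\infty\le\|V_{{\rm eff},X}\|_\infty e^{m l(X)}$ over generators pinned along $\Gamma$ gives $\le c|\Gamma|\,|V_{\rm eff}|_m$. I expect the care needed to lie in making the pure-gauge argument rigorous for noncontractible $X$ and in controlling the number of generators adjacent to $\Gamma$.
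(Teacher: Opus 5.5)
\paragraph{Where the proposal breaks.} Most of your proposal follows the paper: gauge averaging with $\Delta=6$, $G$-invariance of $A[C]$ and $\partial A[C]$, the injective map $S\mapsto S+\sum_i D(\Gamma_i)$, an $e^{-mL}$ remainder from sets with $l(X)\ge L$, and choosing $b$ so that $5e^{-\beta+C_{\rm perim}(m)b}<1$. The gap is in the step you flag as the main obstacle. Your claim that every contributing $X$ has $G_0(H(X))$ within $O(1)$ of $\Gamma$ is false, and so is its justification.

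Restricted to a region avoiding $\Gamma$, the flip pattern of $D(\Gamma)$ is closed but in general not exact. Its holonomy around a contractible loop in that region is the linking number with $\Gamma$ mod $2$. So it can be pure gauge only on regions containing no loop that links $\Gamma$. For a concrete failure, take $X$ to be a square loop of edges of side $\rho$, centered on and perpendicular to a straight stretch of $\Gamma$, so that $\Gamma$ crosses its spanning square once. Then $G_0(H(X))=G_0(X)$ stays at distance about $\rho/2$ from $\Gamma$. Yet $W_X$ flips sign under $S\to S+D(\Gamma)$, so a term $V_{{\rm eff},X}\propto W_X$ contributes to $\delta_\Gamma V_{\rm eff}$.

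The real danger is contractible loops encircling $\Gamma$ at every distance, not sets that wrap the torus. Consequently, ``summing over generators pinned along $\Gamma$'' does not yield $c|\Gamma|\,|V_{\rm eff}|_m$. A further point: $|V_{\rm eff}|_m$ pins at $g_0\in G_0(X)$, not $G_0(H(X))$. So even a hull touching $\Gamma$ would not suffice without a sum over distances.

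\paragraph{The missing step.} The paper's argument trades distance for size. A contributing $X$ with $l(X)<L$ contains a contractible cycle $C'$ of some length $r\le l(X)$ linking $\Gamma$. $C'$ lies in a cube of side $r/2$, and $\Gamma$ must cross that cube, because it meets a spanning surface of $C'$ inside it. Hence some vertex of $C'$, which is an element of $G_0(X)$, lies in the tube $N(\Gamma,\tfrac32 r)$, and this tube has at most $c_0r^3|\Gamma|$ vertices.

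Pin at that vertex and use $\sum_{X:\,g_0\in G_0(X),\ l(X)\ge r}\|V_{{\rm eff},X}\|_\infty\le e^{-mr}|V_{\rm eff}|_m$. This gives the perimeter bound with $C_{\rm perim}(m)=2c_0\sum_{r\ge 4}r^3e^{-mr}$. This is where exponential locality is actually needed: the decay $e^{-mr}$ must beat the polynomial growth of the tube. Any $m>0$ works, so you need not take $m$ large. With this replacement, the rest of your argument goes through as written.
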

\begin{proof}
For fixed $R$, $L_0$ and $\varepsilon$ are defined in Lemma \ref{lemmahomology}, and the bottles $A[C]$ and bottlenecks $\partial A[C]$ below that. Our goal is to show for the perturbed Gibbs measure $\frac{1}{Z} e^{-\beta(H_0 + V)}$ that the bottleneck ratio
\[\mu(\partial A[C])/\mu(A[C])\]
is upperbounded by $e^{-\Omega(L)}$ as $L \to \infty$. By Lemma \ref{lemmabottleneck}, this will show an exponentially long mixing time for any Markov dynamics satisfying detailed balance for $\mu$.

Since $A[C]$ and $\partial A[C]$ are $G$-symmetric sets, we have
\[\mu(\partial A[C])/\mu(A[C]) = \mu^G(\partial A[C])/\mu^G(A[C]),\]
where $\mu^G$ is the gauge averaged Gibbs measure $\frac{1}{Z} e^{-\beta(H_0 + V_{\rm eff})}$. Thus it will suffice to upperbound the bottleneck ratio on the right-hand side.

As in the main text, we have an injective map
\[i:\partial A[C] \hookrightarrow A[C] \times \Theta\]
where $\Theta$ is the set of size-$k \in [1,M]$ sets of loops of length in $[\epsilon L,L/2)$ whose total length is less than $L/2$. In particular, for $\Gamma \in \Theta$ define
\[D(\Gamma) = \sum_{\gamma \in \Gamma} D(\gamma).\]
Then $i(S) = (S + D(\Gamma),\Gamma)$. Also set
\[|\Gamma| = \sum_{\gamma \in \Gamma} |\gamma|.\]
We have
\[\label{eqnpeierlsstepone}\mu^G(\partial A[C]) = \frac{1}{Z} \sum_{(S',\Gamma) \in \text{Im}(i)} e^{-\beta(|\partial S'|+|\Gamma|+V_{\rm eff}(S'+D(\Gamma)))} \\
\le \frac{1}{Z} \sum_{S' \in A[C]} e^{-\beta(|\partial S'| + V_{\rm eff}(S'))} \sum_{\Gamma \in \Theta} e^{-\beta(|\Gamma| +  V_{\rm eff}(S'+D(\Gamma))-V_{\rm eff}(S'))}\]
In order to run the Peierls bound, we desire an upper bound on the size of
\[\delta_\Gamma V_{\rm eff}(S) :=V_{\rm eff}(S+D(\Gamma)) - V_{\rm eff}(S)\]
For each $X \subset \Lambda$, $V_{{\rm eff},X}$ in \eqref{eqnlocalinteractioneffective} is gauge invariant. So for it to contribute to $\delta_\Gamma V_{\rm eff}(S)$, it must contain such a lattice 1-cycle $C'$ linking $\Gamma$. In this case, its contribution is upper bounded by $2||V_{{\rm eff},X}||_\infty$. Therefore,
\[|\delta_\Gamma V_{\rm eff}(S)| \le 2\Sum_{X \subset \Lambda \\ \exists C' \subset X , C'\text{ linking }\Gamma} ||V_{{\rm eff},X}||_\infty \]
The linking cycle $C'$ must either be contractible or have length at least $L$ to wrap the 3-torus. Let us therefore separate the sum into ``small'' sets $X$ with $l(X) < L$ and ``large'' sets $X$ with $l(X) \ge L$. The contribution from large sets may be upper bounded by
\[2\sum_{g_0 \in G_0} \Sum_{X:g_0 \in G_0(X) \\ l(X) \ge L} ||V_{{\rm eff},X}||_\infty \le 2L^3 \sup_{g_0 \in G_0} \Sum_{X:g_0 \in G_0(X) \\ l(X) \ge L} ||V_{{\rm eff},X}||_\infty e^{m(l(X)-L)} \\
\le 2L^3 e^{-mL} |V_{\rm eff}|_m =: R_m\]
As long as $|V_{\rm eff}|_m < \infty$, which we will guarantee below, $R_m$ will be an exponentially small remainder which does not contribute to the large $L$ Peierls estimate.

Now we proceed to bound the contribution from small sets $X$ containing a contractible loop $C'$ linking $\Gamma$. Suppose $C'$ has length $r$. Then, since $C'$ is contractible, it is contained in a bounding cube of side length $r/2$. Since it links $\Gamma$, $\Gamma$ must pass through this cube as well. Thus, there is a lattice path $p$ of length $l(p) \le (3/2)r$ from a vertex of $C'$ to a vertex at a corner of a plaquette of $\Gamma$. Let $N(\Gamma,(3/2)r) \subset G_0$ denote the set of vertices which can be connected to $\Gamma$ by such a path. By the above, we have $G_0(C') \cap N(\Gamma,(3/2)r) \neq \varnothing$. Thus, we can give a bound
\[\Sum_{X \subset \Lambda \\ \exists C' \subset X \\ C'\text{ linking }\Gamma, \text{ contractible}} ||V_{{\rm eff},X}||_\infty \le \sum_{r=4}^{L-1} \sum_{g_0 \in N(\Gamma,(3/2)r)} \Sum_{X: g_0 \in G_0(X) \\
l(X) \ge r} ||V_{{\rm eff},X}||_\infty \\
\le |V_{\rm eff}|_m\sum_{r=4}^{L-1} |N(\Gamma,(3/2)r)| e^{-mr}\]
$|N(\Gamma,(3/2)r)|$ is the number of points in an $\ell^1$-metric tube around $\Gamma$. Thus there is a universal constant $c_0$ such that
\[|N(\Gamma,(3/2)r)| \le c_0 r^3 |\Gamma|.\]
Let
\[C_{\rm perim}(m) = 2c_0 \sum_{r=4}^{\infty} r^3 e^{-mr}\]
This is finite and goes to 0 as $m \to \infty$. We have
\[|\delta_\Gamma V_{\rm eff}(S)| \le C_{\rm perim}(m) |V_{\rm eff}|_m|\Gamma| + R_m\]

Combined with \eqref{eqnpeierlsstepone} we obtain
\[\mu^G(\partial A[C]) \le \frac{e^{\beta R_m}}{Z} \sum_{S' \in A[C]} e^{- \beta (|\partial S'|+V_{\rm eff}(S'))}\sum_{\Gamma \in \Theta} e^{-\beta(1 -  C_{\rm perim}(m)|V_{\rm eff}|_m) |\Gamma|} \\
= e^{\beta R_m} \mu(A[C])\sum_{\Gamma \in \Theta} e^{-\beta(1 -  C_{\rm perim}(m)|V_{\rm eff}|_m) |\Gamma|}.\]
We see at this stage that $V_{\rm eff}$ has just served to renormalize the tension of the long loops $\Gamma$. We need to ensure that the tension is still within the radius of convergence of the Peierls estimate. Let
\[\label{eqnqparam} l_* = \lceil \varepsilon L\rceil \qquad q = 5 e^{-\beta(1-C_{\rm perim}(m)|V_{\rm eff}|_m)}.\]
Let $\Theta_k$ be the subset of $\Theta$ consisting of those $\Gamma$ with $k$ components. We have, as long as $q < 1$,
\[\sum_{\Gamma \in \Theta} e^{-\beta( +  C_{\rm perim}(m)|V_{\rm eff}|_m) |\Gamma|} = \sum_{k=1}^M \sum_{\Gamma \in \Theta_k} e^{-\beta( +  C_{\rm perim}(m)|V_{\rm eff}|_m) |\Gamma|} \\
\le \sum_{k=1}^M \left(L^3\sum_{r=l_*}^\infty q^r\right)^k
\\ = \sum_{k=1}^M \left(\frac{L^3 q^{l_*}}{1-q} \right)^k\]
For all $q < 1$, this function has exponential decay $O(\exp\left(-\lambda L \right))$ for all
\[0<\lambda < \varepsilon (-\log q).\]
We will choose $C_{\rm stab}(T,\eta)$ so that for all $V$, $q$ will be bounded away from 1 uniformly in $L$, it will suffice for the theorem to choose $\lambda = \varepsilon (-\log q)/2$ and $A_0$ an appropriately large constant. Indeed, this gives an upper bound on the bottleneck ratio $\mu^G(\partial A[C])/\mu^G(A[C])$ which goes to zero exponentially quickly with $L$. Since on $T^3$, there are 8 possible $[C]$, and all the $A[C]$ are disjoint, one of them will have $\mu^G(A[C]) < 1/2$. Thus, by Lemma \ref{lemmabottleneck}, we will get the desired bound on the mixing time.

To finish the proof, suppose that $T < 1/\log 5$, i.e.
\[\beta > \log 5.\]
Then for all $\eta > \log 6 + 1$, we can choose $m$ and $b$ such that
\[
0 < m < \eta - \log 6 -1 \\
0 < b < \min\left(\eta-m - \log 6 - 1, \frac{\beta -\log 5}{C_{\rm perim}(m)}\right)\]
By the Gauge Averaging Theorem (Theorem \ref{apptheoremstatement}) (here $\Delta = 6$), so long as
\[\beta |V|_\eta < C_{\rm gauge}(\eta,m,b),\]
we obtain a gauge-invariant effective interaction $V_{\rm eff}$ with $\beta|V_{\rm eff}|_m < b$. In particular, by the condition on $b$, $q$ in \eqref{eqnqparam} will be bounded uniformly in $L$ by
\[q < 5e^{-\beta+b C_{\rm perim}(m)} < 1\]
Thus, the theorem is proved with $C_{\rm stab}(T,\eta) = C_{\rm gauge}(\eta,m,b)$.

\end{proof}

\end{document}